%% file: main.tex
\documentclass[11pt]{article}

\usepackage[T1]{fontenc}
\usepackage[utf8]{inputenc}
\usepackage[a4paper,margin=1in]{geometry}
\usepackage{xcolor}
\usepackage{lipsum}

\usepackage{amsmath,amssymb,amsthm,mathtools,bbm}
\usepackage{microtype}
\usepackage{xspace}
\usepackage{enumitem}
\usepackage{booktabs}
\usepackage{multirow}

\usepackage{tikz}
\usepackage{wrapfig}
\usepackage{comment}

\usepackage[ruled,vlined,linesnumbered]{algorithm2e}

\usepackage{framed}
\usepackage[framemethod=tikz]{mdframed}

\usepackage{hyperref}

\usepackage{aliascnt}

\newtheorem{theorem}{Theorem}[section]
\newtheorem{question}{Question}

\newaliascnt{lemma}{theorem}
\newtheorem{lemma}[lemma]{Lemma}
\aliascntresetthe{lemma}

\newaliascnt{claim}{theorem}
\newtheorem{claim}[claim]{Claim}
\aliascntresetthe{claim}

\newaliascnt{corollary}{theorem}
\newtheorem{corollary}[corollary]{Corollary}
\aliascntresetthe{corollary}

\newaliascnt{proposition}{theorem}

\aliascntresetthe{proposition}

\newaliascnt{observation}{theorem}
\newtheorem{observation}[observation]{Observation}
\aliascntresetthe{observation}

\theoremstyle{definition}
\newaliascnt{definition}{theorem}
\newtheorem{definition}[definition]{Definition}
\aliascntresetthe{definition}

\theoremstyle{remark}
\newaliascnt{remark}{theorem}
\newtheorem{remark}[remark]{Remark}
\aliascntresetthe{remark}

\usepackage[nameinlink,noabbrev,capitalise]{cleveref}

\crefname{theorem}{Theorem}{Theorems}
\Crefname{theorem}{Theorem}{Theorems}

\crefname{lemma}{Lemma}{Lemmas}
\Crefname{lemma}{Lemma}{Lemmas}

\crefname{claim}{Claim}{Claims}
\Crefname{claim}{Claim}{Claims}

\crefname{corollary}{Corollary}{Corollaries}
\Crefname{corollary}{Corollary}{Corollaries}

\crefname{proposition}{Proposition}{Propositions}
\Crefname{proposition}{Proposition}{Propositions}

\crefname{observation}{Observation}{Observations}
\Crefname{observation}{Observation}{Observations}

\crefname{definition}{Definition}{Definitions}
\Crefname{definition}{Definition}{Definitions}

\crefname{remark}{Remark}{Remarks}
\Crefname{remark}{Remark}{Remarks}

\crefname{algocf}{Algorithm}{Algorithms}
\Crefname{algocf}{Algorithm}{Algorithms}

\newcommand{\V}{V}
\newcommand{\E}{E}
\newcommand{\G}{G}
\newcommand{\Hgraph}{H}

\newcommand{\dist}{\mathrm{dist}}
\newcommand{\distG}{\dist_{\G}}
\newcommand{\distH}{\dist_{\Hgraph}}

\newcommand{\Pxy}{P_{x,y}}

\title{Greedy Completion for Weighted $(\alpha,\beta)$-Spanners}
\author{Elad Tzalik\thanks{Weizmann Institute. Email: \texttt{elad.tzalik@weizmann.ac.il}. Supported by the Adams Fellowship Program of the Israel Academy of Sciences and Humanities.}}
\date{}

\begin{document}
\maketitle

\begin{abstract}
    We study $(\alpha,\beta)$-spanners for weighted graphs. We propose a simple greedy completion procedure which starts from a sparse initial graph, and repeatedly fixes pairs of vertices with a bad stretch, generalizing Knudsen's additive completion [SWAT '14]. As an application, we
construct $(k,k-1)$-spanners for weighted graphs of size $\tilde{O}(n^{1+1/k})$, which were previously unknown.
\end{abstract}


\input{intro}

\input{completion}

\input{analysis}

\input{Conclusions}

\bibliographystyle{alpha}
\bibliography{references}

\appendix

\input{Even}

\end{document}

%% file: intro.tex
\section{Introduction}

Let $G$ be an $n$-vertex weighted graph.
A \emph{spanner} of $G$ is a subgraph that approximately preserves distances.
Formally, a subgraph $H$ is a $t$-spanner of $G$ if $\dist_H (u,v) \leq t \cdot \dist_G (u,v)$ for all $u,v\in V$ (where $\dist_X$ denotes the shortest path distance in a graph $X$).
Since their introduction by Peleg and Ullman \cite{PelegU88} and Peleg and Sch\"{a}ffer \cite{PelegS89}, spanners have been found to be extremely useful for a wide variety of applications, including network tasks like routing and synchronization \cite{PelegU89a,ThorupZ01,Cowen01,CowenW04,PelegU88}, preconditioning linear systems \cite{ElkinEST08}, distance estimation \cite{ThorupZ05,BaswanaS04}, and many others.   

The first tight construction, and one of the most influential spanner constructions, was given by \cite{AlthoferDDJS93}, which described a simple greedy algorithm that produces a $(2k-1)$-spanner of a graph $G$, of size $O(n^{1+\frac{1}{k}})$ edges, which is best possible\footnote{The $O(n^{1+\frac{1}{k}})$ upper bound is tight \emph{assuming the girth conjecture of Erd\H{o}s}. Nevertheless, the algorithm of \cite{AlthoferDDJS93} is optimal \emph{unconditionally}.}.

To bypass this size-stretch tradeoff, much work has gone into finding better ways to parameterize how shortest paths in $G$ may stretch in $H$. The most well-studied such generalization is $(\alpha,\beta)$-spanners which allow a multiplicative stretch $\alpha$ (which may be viewed as the effective stretch for pairs that are sufficiently far apart in the graph), plus an additive stretch $\beta$ which is used to mask the local barrier of the stretch of the endpoints of an edge in a high girth graph. Let $W_{max}(G)$ denote the maximum weight of an edge in $G$.

\begin{definition}[$(\alpha,\beta)$-spanner]\label{def:alpha-beta-spanner}
    A subgraph $H \subseteq G$ is an \emph{$(\alpha,\beta)$-spanner} of $G$ if for every $u,v \in V$,
    \[
    \dist_H(u,v) \leq \alpha \cdot \dist_G(u,v) + \beta \cdot W_{\max}(G).
    \]
\end{definition}

\paragraph*{The quest to handle weights.}
Spanners are often used to compress metric spaces that correspond to weighted
input graphs, see e.g. \cite{CaiK97,DobsonB14,MarbleB13,SalzmanSAH14}. While standard multiplicative spanners can handle weights easily, algorithms that construct $(\alpha,\beta)$-spanners for unweighted graphs typically take non-trivial effort to be adapted to handle weights. A notable example is the $+6$ spanner \cite{BaswanaKMP10,knudsen2014additive}, where extending the construction to weighted graphs required three separate works \cite{AhmedBSKS20,ElkinGN23,LaL24}. 
Initially, the work of Cohen \cite{Cohen00} constructed weighted $(1+\varepsilon,\beta)$ spanners, which was then improved by \cite{Elkin01}. 
In recent years, there has been a renewed and growing interest in understanding spanners for weighted graphs. 
The work of \cite{ElkinGN22} achieved constructions of weighted $(\alpha,\beta)$ spanners with \emph{local} stretch, meaning that $W_{max}(G)$ is replaced by $W_{max}(P)$, which stands for the smallest maximum edge weight along a shortest path $P$ between nodes. 
A sequence of recent works \cite{AhmedBSKS20,ElkinGN23,AhmedBHKS21,LaL24} gave tight bounds for weighted \emph{additive} spanners. A key ingredient in these works is the work of Knudsen \cite{knudsen2014additive}, which describes a simple algorithm to obtain additive spanners greedily.

\paragraph*{Handling weights: beyond additive stretch.} The main downside of additive spanners is that they may be large. The seminal work of Abboud and Bodwin \cite{DBLP:journals/jacm/AbboudB17} proved that such spanners may require $\Omega(n^{\frac{4}{3}-\varepsilon})$ edges for any $\epsilon>0$. 
This motivates the study of weighted $(\alpha,\beta)$-spanners for $\alpha>1$. Unfortunately, this case is far less understood with only \emph{one prior work}\footnote{Putting aside the case of $\beta=0$ for which the classical multiplicative spanners of \cite{AlthoferDDJS93} handle weights.} providing such constructions, by Elkin, Gitlitz and Neiman \cite{ElkinGN22}, constructing a weighted $\left(O(1),k^{O(1)} \right)$- spanner with $O(n^{1+\frac{1}{k}})$ edges, as well as weighted $\left(1+\varepsilon,\beta \right)$ spanners with $\beta=O\left(\frac{\log(k)}{\varepsilon}\right)^{\log (k)}$. 

\paragraph*{The "local" recipe for $(\alpha,\beta)$ spanners, and how it fails to handle weights.}
We now explain why existing constructions of $(\alpha,\beta)$-spanners fail to handle weighted graphs, e.g. those of \cite{BaswanaKMP10,Ben-LevyP20,ChechikL26,PopovaTzalikITCS}. When the graph $G$ is unweighted, obtaining an $(\alpha,\beta)$ spanner can be done "locally'' by guaranteeing good stretch for paths of fixed length. For example, to obtain a $(k,k-1)$-spanner for an \emph{unweighted} graph $G$, one can take a standard spanner $H_1$ that guarantees a stretch of $2k-1$ for a single edge, and a subgraph $H_2$ which guarantees a stretch of $k$ for vertices of distance $2$. $H_1\cup H_2$ is the desired spanner - indeed given $x,y$ with a shortest path $P:=P_{x,y}$ with $t$ edges in $G$, break $P$ into $\lfloor t/2\rfloor$ paths of length $2$ with an extra edge remaining if $t$ is odd, and use $H_2$ to replace the length $2$ paths, and $H_1$ to replace the remaining single edge. 
As it turns out, such a ''recipe" cannot yield a $(k,k-1)$-spanner for weighted graphs. In a recent work, Popova and Tzalik \cite{PopovaTzalikITCS} proved that in any weighted graph $G$ the best approximation one can provide for a path $P$ of length $2$ is $W_{min}(P)+(2k-1)W_{max}(P)$\footnote{$W_{max}(P)$ denotes the maximum edge weight along $P$, $W_{min}$ the minimum one.}, assuming the desired size bound on the spanner is $O(n^{1+1/k})$, hence in particular, such an approach may incur an additive $+(2k-2)W_{max}(G)$ error on every second edge along the path, whereas a $(k,k-1)$-spanner allows a $(k-1)W_{max}(G)$ additive error only once along the path. 

Putting it all together, recent works have investigated how to construct weighted $(\alpha,\beta)$-spanners, and while the important case of additive spanners ($\alpha=1$) has been resolved in \cite{AhmedBSKS20,ElkinGN23,AhmedBHKS21,LaL24}, the \emph{general} case of $\alpha>1$ has resisted progress, with only a single known construction by \cite{ElkinGN22}. 
Due to the lack of a desired "canonical'' construction of $(\alpha,\beta)$-spanners, in a survey on graph spanners from 2020, the following question (open question 4.7.2 \cite{AhmedBSHJKS20}) was asked:

\begin{question}
     Is there a greedy algorithm to produce additive spanners, $(\alpha,\beta)$-spanners, subsetwise spanners, or more generally pairwise spanners?
\end{question}
In this paper we answer this question in the affirmative for $(\alpha,\beta)$-spanners by introducing a
greedy completion algorithm that extends Knudsen’s approach to $(\alpha,\beta)$-spanners.

Using the new completion method, we obtain as an application a $(k,k-1)$-spanner for any graph $G$, with essentially tight size $\Tilde{O}(n^{1+1/k})$\footnote{The size bounds are tight assuming the girth conjecture, as a graph with girth $>2k$ has only itself as a $(k,k-1)$ spanner, even in the unweighted setup. }. To our knowledge, such spanners were not previously known.

\begin{theorem}\label{thm:main}
    For $k\geq 2$ and any weighted graph $G$, there exists a $(k,k-1)$ spanner $H$ of $G$ of size $\Tilde{O}(n^{1+1/k})$. Moreover, $H$ can be computed in polynomial time.
\end{theorem}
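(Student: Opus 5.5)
We will build $H$ by a greedy completion in the spirit of Knudsen's additive completion. We start from a sparse initial graph $H_0$: the greedy $(2k-1)$-spanner of \cite{AlthoferDDJS93}, with $O(n^{1+1/k})$ edges, together with a clustering-type skeleton. The skeleton comes from sampling a hierarchy of vertex sets $V=A_0\supseteq A_1\supseteq\dots\supseteq A_{k}=\emptyset$, each level kept with probability $n^{-1/k}$. Every vertex keeps its edges to neighbours lighter than its edge to the nearest $A_{i+1}$-vertex, as in Baswana--Sen/Thorup--Zwick, which costs $\tilde O(n^{1+1/k})$ edges. We then process pairs $(x,y)$, ordered by $\dist_G(x,y)$ (or by hop count of a fixed shortest path $P_{x,y}$). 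Whenever the current $H$ violates $\dist_H(x,y)\le k\,\dist_G(x,y)+(k-1)W_{\max}(G)$, we add to $H$ the edges of $P_{x,y}$ that are ``unclustered'' at some level. Correctness is then immediate from the stopping rule, and polynomial running time is clear, since there are at most $\binom n2$ pairs and each step needs one shortest-path computation. So everything reduces to bounding the number of added edges.

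The size bound will use a charging argument in the style of Knudsen. When a pair $(x,y)$ is fixed, we will show that $P_{x,y}$ contains few ``missing'' edges of $H$, or else the pair was already satisfied. Concretely, suppose $P_{x,y}$ contains $\ell$ edges not in $H$, each with an endpoint that is lightly clustered at the appropriate level. We will argue via the cluster structure that some ``centre'' vertices $c_1,\dots,c_\ell$ adjacent (through clusters) to these edges have pairwise $H$-distances that improve substantially once the path is added. This improvement must be paid for in weights. The key weighted inequality replaces the unweighted ``each missing edge saves one hop'': detouring through a cluster of radius $r$ costs at most $2r\le 2W$, where $W$ is the weight of the missing edge whose endpoint is unclustered. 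Because the path is a shortest path, this is at most $2$ times its own contribution. Telescoping along $P_{x,y}$ should then yield $\dist_H(x,y)\le k\dist_G(x,y)+(k-1)W_{\max}$ as soon as fewer than a threshold number of such edges are missing. It follows that each addition of $\ell$ edges strictly decreases a potential, such as $\sum_{c,c'}\dist_H(c,c')$ over centre pairs measured in units of $W_{\max}$-scaled buckets, by $\Omega(\ell^2)$ or $\Omega(\ell)$ units, which bounds the total additions.

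To make potentials weight-independent, we will bucket edge weights geometrically. The $\tilde O$ will absorb $O(\log n)$ levels, either by handling weights only up to a factor $n$ below $W_{\max}$ or by a standard reduction where lighter edges are contracted or handled by $H_0$. Within each bucket the analysis behaves like the unweighted one with $W_{\max}$ replaced by the bucket scale, and the tolerance $(k-1)W_{\max}$ is spent only once. This fixes the obstacle, explained earlier, of the local two-hop recipe, where $(2k-2)W_{\max}$ is lost on every second edge: we never decompose the path, and instead let the greedy check pay the additive term once globally.

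The main obstacle I expect is the core stretch lemma. It says that after $H_0$ is included, any pair whose path has few missing clustered edges already achieves stretch $(k,k-1)$, with multiplicative factor exactly $k$ rather than $2k-1$. This requires tracking, level by level, that a detour around a missing edge $e$ costs $(2i+1)w(e)$ only at level $i$, and that the levels amortize along the path so that the average is $k$ plus a single $(k-1)W_{\max}$ boundary term. I would first try to prove it by induction on the level $i$, with the invariant that a subpath of $P_{x,y}$ whose missing edges are all at levels $\le i$ is stretched by at most $k\cdot w+i\,W_{\max}$. The completion step then handles exactly the subpaths where the invariant fails, and the potential argument bounds how often this can happen by $\tilde O(n^{1+1/k})$.
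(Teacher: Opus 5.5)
There is a genuine gap. Your completion rule (``add the edges of $P_{x,y}$ that are unclustered at some level'') is never pinned down, and neither step meant to carry the size bound holds. First, the core stretch lemma is not true in general. The only thing a $(2k-1)$-spanner $H_0$ guarantees about a missing edge is a detour of stretch up to $2k-1$. Suppose $P_{x,y}$ consists of two edges of weight $W=W_{\max}(G)$, both missing and each spanned only with stretch close to $2k-1$ (e.g.\ each closes a $2k$-cycle of slightly lighter edges). Then $\distH(x,y)$ is close to $(4k-2)W$, which exceeds $(3k-1)W=k\,\distG(x,y)+(k-1)W$. So already two missing edges break the guarantee. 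Your level invariant (stretch $\le k\cdot w+iW_{\max}$ for subpaths whose missing edges have level $\le i$) also fails on any long subpath of missing edges with detour factor $2i+1>k$. The only true statement of this kind is \Cref{obs:one-missing-edge-stretch}: a single distant edge, with the rest already cut into stretch-$\le k$ segments. Beyond that, the paper never argues that few missing edges suffice. It uses the violation of $(x,y)$ directly: if neither $(s_x,s_i)$ nor $(s_i,s_y)$ improves by $W/2$, then $\distH(x,y)\le k\,\distG(x,y)+W$ (\Cref{lem:distance-improvement}). Your per-edge detour accounting (``$2r\le 2W$, at most twice its own contribution'') is exactly the local recipe the paper explains cannot give multiplicative factor $k$. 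The geometric bucketing is both unjustified, since paths mix buckets while the tolerance is the single global $(k-1)W_{\max}(G)$, and unnecessary, since the paper measures every threshold in units of $W_{\max}(G)$.

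Second, and more fundamentally, your potential argument has no per-pair budget when $\alpha=k>1$. In Knudsen's additive setting, a pair within $\distG+O(1)$ can improve only $O(1)$ more times, because $\distH\ge\distG$. Here, even after a center pair $(s,s')$ reaches $\distH(s,s')\le k\,\distG(s,s')+(k^2-1)W$ (\Cref{clm:post-setoff}), its distance can keep dropping to $\distG(s,s')$. That is a drop of $\Theta(k\,\distG(s,s')/W)$ units, which is unbounded, so $\sum_{c,c'}\distH(c,c')/W$ yields nothing like $\tilde O(n^{1+1/k})$.

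The missing idea is the paper's shortcut rule: compute a \emph{minimal} $k$-segmentation of $P_{x,y}$ and add only its distant edges. Minimality is used three times:
\begin{enumerate}
\item[(i)] After the step, every contiguous union of segments has stretch $\le k$ (\Cref{lem-single-completion}). This gives the post set-off bound.
\item[(ii)] One center cannot serve three left endpoints of pre-light added edges, since the segments between them could then be merged. Hence a step adding $t$ edges improves $\Omega(t)$ \emph{distinct} center pairs (\Cref{lem:edge-diff-cent}).
\item[(iii)] A finalized pair, with $\distH(s,s')\le k\,\distG(s,s')-k^2W$, never appears among the counted pairs. Otherwise its attachment points would have stretch $\le k$ and be mergeable (\Cref{lem:finalized-no-improve}).
\end{enumerate}
The paper combines this with centers at the single level $R=(k-1)/2$. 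Going to and from centers then costs exactly $(k-1)W$, and $|S_R|^2=\tilde O(n^{1+1/k})$. Together these cap each pair at $O(k^2)$ counted improvements. Without such a rule, adding many missing edges whose endpoints share few centers, or improving already-finalized pairs, is charged to nothing, and the size bound does not follow.
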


The $(k,k-1)$-spanners we construct can be viewed as a spanner whose multiplicative stretch is roughly half the multiplicative stretch of the classical greedy spanner, while still allowing for a single bottleneck, reflected in the additive $(k-1)$ term. To our knowledge, prior to this work there was no known construction of an $(\alpha,\beta)$ spanner that improves upon the stretch of the greedy algorithm for weighted graphs, and with similar size.\footnote{The work of \cite{PopovaTzalikITCS} does provide spanners which improve upon the multiplicative $(2k-1)$-spanner of \cite{AlthoferDDJS93} for non-adjacent pairs of vertices. The key distinction is that in this work, the stretch is roughly $\frac{2k-1}{k}$, aka \emph{twice} better then a multiplicative spanner, yet their methods cannot give such effective bounds. }

\paragraph*{Conceptual contribution.}
Beyond the $(k,k-1)$ construction, the main contribution of this work is a new
\emph{global greedy completion paradigm} for constructing weighted $(\alpha,\beta)$-spanners. 
The algorithm extends Knudsen’s additive completion by introducing a notion of
\emph{minimal segmentations} of shortest paths, and repairing only the bottlenecks that
violate the target stretch. This avoids the over-correction inherent in naive greedy
approaches, and crucially enables handling weighted graphs, where local path-based
constructions fail. Interestingly, the analysis gives insight towards the actual structure of the replacement shortest paths, beyond just the $(\alpha,\beta)$ stretch guarantee, and leads to interesting new direction, described in \Cref{sec:final-remarks}.

\paragraph*{Comparison to~\cite{PopovaTzalikITCS}.}
The two works study different aspects of greedy spanner constructions.
The focus of~\cite{PopovaTzalikITCS} is on greedy constructions of $d \to r$ spanners in
unweighted graphs, namely subgraphs in which every pair of vertices at
distance exactly $d$ in $G$ has distance at most $r$ in the spanner.
Such constructions naturally serve as \emph{local} ingredients in the
design of unweighted $(\alpha,\beta)$-spanners, and their analysis relies
on clustering, ball-growth, and path-buying arguments around short paths.
In contrast, the present work studies weighted $(\alpha,\beta)$-spanners
directly. As explained previously, the more challenging weighted setting, is inherently \emph{global} and requires new methods. Consequently, both the algorithm and the analysis in the present paper are completely different. 


\section{Technical Overview}

Throughout the technical overview we assume for simplicity the graph $G$ is unweighted. The fact that the proposed algorithm is global, in the sense that it is independent of the actual number of edges along a path, makes it possible to handle weighted graphs.

\subsection{Knudsen's algorithm}

Knudsen's algorithm takes as input a graph $G$, and an integer $\beta$, and produces a $+\beta$ spanner of $G$. It consists of $2$ main phases:
\begin{itemize}
    \item(Initialize) Compute an initial graph $H_0$, obtained by going over all $v\in V$ and adding $n^{\gamma}$ arbitrary neighboring edges to $v$. Set $H\gets H_0$.

    \item(Additive Completion) While there exist $x,y$ such that $\dist_H(x,y) > \dist_G(x,y)+\beta $: Find an $x$-to-$y$ shortest path $P_{x,y}$. Add to $H$ all edges of $P_{x,y}$ missing from $H$.
\end{itemize}

We now give a high-level overview of the analysis. The analysis hinges on showing that in each completion step, if in the $i^{th}$ step $t_i$ new edges along $P_{x,y}$ are added, then $\Omega(t_i n^{1-\gamma})$ pairs\footnote{This doesn't hold for all $\gamma>0$, but for $\beta=2,6$ one may set $\gamma=1/2,1/3$ respectively, and get this way a spanner of optimal size.} of vertices have their distance substantially improved by the completion step, meaning that if $H$ is the graph before the completion step, and $H'$ is the graph after the completion step, then for $\Omega(t_i n^{1-\gamma})$ pairs $(u,v)$ we have $\dist_H(u,v)-\dist_{H'}(u,v)\geq 1$. 
Moreover, in the first $H$ for which $(u,v)$ improves, we have $\dist_{H'}(u,v)\leq \dist_G(u,v)+O(1)$. Combining the fact that a pair $(u,v)$ is within $O(1)$  of its distance in $G$, after the first improvement, there can be $O(1)$ additional steps for which $\dist_H(u,v)-\dist_{H'}(u,v)\geq 1$. Therefore one gets that $\sum_i \Omega(t_i n^{1-\gamma}) = O(n^{2})$, hence $\sum_i t_i = O(n^{1+\gamma})$.
In conclusion the final size bound of the output spanner $H$ is:
\[|E(H)| = O(n^{1+\gamma}) + \sum_i t_i = O(n^{1+\gamma}).\]

\subsection{Defining the $(\alpha,\beta)$-completion}

As usual with greedy algorithms, correctness is straightforward; the main challenge is to show that the final output is sparse.  We now describe the main conceptual contribution of the paper, which lies in finding the right way to define the $(\alpha,\beta)$-completion in a way that preserves correctness and, crucially, allows one to prove a good bound on the size of the output graph. To do so, we incorporate two delicate adaptations to Knudsen's algorithm. We describe the algorithm progressively, through a sequence of increasingly refined attempts. Assume we are given a graph $G$, parameters $(\alpha,\beta)$, and some initial graph $H_0$ to be completed to an $(\alpha,\beta)$-spanner of $G$. A natural first attempt is to apply Knudsen's original idea directly.

\paragraph*{Attempt 1 (unsuccessful): naive $(\alpha,\beta)$-completion} 

While there exist $x,y$ such that: \[\dist_H(x,y) > \alpha \cdot \dist_G(x,y)+\beta.\] 
Find an $x$-to-$y$ shortest path $P_{x,y}$. Add to $H$ all edges of $P_{x,y}$ missing from $H$.

The naive $(\alpha,\beta)$-completion successfully fixes pairs $x,y$ with bad stretch --- but at what cost? The key problem with the naive completion is its \emph{over-correction}. If initially $x,y$ satisfied $\dist_H(x,y)=\alpha \cdot \dist_G(x,y)+\beta+1$, it seems plausible that a small fix could have done the job. On the other hand, by adding all missing edges along $P_{x,y}$ we make the distance between $x,y$ in the new graph equal to their actual distance in $G$ - this indeed fixes the bad stretch of $x,y$, but it seems to take it too far.

\paragraph*{Attempt 2 (unsuccessful): $(\alpha,\beta)$-completion by fixing the bottlenecks} 

The key observation is that we can find \emph{a mild} fix to the distance of $x,y$ by only fixing the bottlenecks along a shortest path. Let $P_{x,y}$ be an $x$-to-$y$ shortest path, and let $e$ be an edge on $P_{x,y}$. Say $e=\{u,v\}$ has $\alpha$-distant endpoints in $H$ if $\dist_H(u,v) > \alpha \cdot \dist_G(u,v)$. The key observation is that if  \emph{all $\alpha$-distant} edges are added along the path, then in the new graph $H'$ $\dist_{H'}(x,y)\leq \alpha\cdot \dist_G(x,y)$, hence this still fixes the pair $x,y$, but now only the ``bottlenecks'' along the path are added to $H$, which helps ensure that only a few edges are added.

Formally the new $(\alpha,\beta)$-completion is:

\noindent While there exist $x,y$ such that: \[\dist_H(x,y) > \alpha \cdot \dist_G(x,y)+\beta.\] 
Find an $x$-to-$y$ shortest path $P_{x,y}$. Add to $H$ all the $\alpha$-distant edges of $P_{x,y}$ missing from $H$.

This idea is close to the final algorithm, but one additional ingredient is needed to control the total number of added edges.

\paragraph*{Attempt 3 (successful): incorporating shortcuts} 

The last optimization is to allow bypassing bottlenecks along $P_{x,y}$ if possible. The key observation is that it may be the case that for example $e_1=(u,v),e_2=(v,w),e_3=(w,x)$ are three consecutive edges along $P_{x,y}$ which are all $\alpha$-distant, yet $\dist_H(u,x)\leq \alpha \cdot \dist_G(u,x)$. Hence if we can find in $H$ a path of stretch $\alpha$ between endpoints of $\alpha$-distant edges which we can use to "jump over'' the bottlenecks - we do so. 

In other words, we partition $P_{x,y}$ into segments that are either a single $\alpha$-distant edge, or pairs of vertices with stretch $\leq \alpha$ in $H$,
and choose one whose segments cannot be merged further; we call such segmentations \emph{minimal}. Finally the $(\alpha,\beta)$-completion is:

\noindent While there exist $x,y$ such that: \[\dist_H(x,y) > \alpha \cdot \dist_G(x,y)+\beta.\] 
Find an $x$-to-$y$ shortest path $P_{x,y}$. Compute a minimal segmentation of $P_{x,y}$ w.r.t $H$. Add to $H$ all the $\alpha$-distant edges in the chosen segmentation.

\subsection{The Baswana-Sen algorithm, and its properties}
The Baswana-Sen algorithm~\cite{baswana2007simple} is a useful tool for computing \emph{multiplicative} spanners, by clustering vertices to centers. The set of centers is defined inductively by $S_0=V$, and $S_i=\mathbf{Sample}(S_{i-1},p)$ where $p=\tilde{O}\left(\frac{1}{n^{1/k}}\right)$. For each vertex $u \in V$ and level $i$, we denote by $s_u^i \in S_i$ the center of $u$ at level $i$. We do not review the algorithm here, and refer the reader to~\cite{baswana2007simple}; for a local description see~\cite{parter2025local}. We now record a key invariant preserved by the Baswana-Sen algorithm.

\begin{theorem}[\cite{baswana2007simple}]\label{thm:baswana-sen}
    For any graph $G$, there exists a subgraph $H' \subseteq G$, with $\tilde{O}(n^{1+1/k})$ edges such that for every $i \in \{1,\ldots,k\}$, and every $\{u,v\}=e \in E(G)$ one of the following holds:
    \begin{enumerate}
        \item $\dist_{H}(u,v)\leq (2i-1)w(e)$; or

        \item $\exists s_u^{i}, s_v^{i} \in S_{i}$ such that $\dist(u,s_u^{i}) \leq i \cdot w(e)$ and $\dist(v,s_v^{i}) \leq i \cdot w(e)$.
    \end{enumerate}
\end{theorem}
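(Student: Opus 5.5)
We would prove \Cref{thm:baswana-sen} by running the Baswana--Sen clustering algorithm on the weighted graph $G$, with $H'$ (denoted $H$ in the statement) being its output. Recall the structure. Set $S_0=V$, and obtain $S_i$ from $S_{i-1}$ by keeping each center independently with probability $p=n^{-1/k}$ (up to logarithmic factors), with $S_k=\emptyset$. At level $i$ we maintain a clustering $\mathcal{C}_i$ of the still-active vertices around centers of $S_i$. Each active vertex $u$ at level $i-1$ looks at the edges to adjacent clusters of level $i-1$. If some adjacent cluster has its center in $S_i$, then $u$ joins the nearest such cluster (by lightest edge $e_u$), adds $e_u$ together with the lightest edge to every adjacent cluster whose lightest edge is strictly lighter than $e_u$, and discards those edges. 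Otherwise $u$ adds the lightest edge to every adjacent level-$(i-1)$ cluster and becomes inactive. The size bound $\tilde{O}(n^{1+1/k})$ is the standard one: a vertex adds edges to $t$ clusters only if none of the $t$ nearest cluster-centers was sampled, which happens with probability $(1-p)^t$. So $t=\tilde{O}(1/p)$ with high probability, and summing over $n$ vertices and $k$ levels gives the bound. We can then fix one outcome that satisfies it, and the statement is existential.

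The core of the proof is an invariant proved by induction on $i$. First, every vertex $u$ clustered at level $i$ is joined to its center $s_u^i$ by a path in $H$ of length at most $i\cdot w_u$, where the edges along the clustering chain are non-increasing in weight. Second, for every edge $e=\{u,v\}$ not yet settled at level $i$ (neither added nor discarded), both endpoints are clustered at level $i$ and every edge on their cluster paths has weight at most $w(e)$. This gives $\dist(u,s_u^i)\le i\,w(e)$ and $\dist(v,s_v^i)\le i\,w(e)$, which is alternative 2. For settled edges we need alternative 1 with stretch $2i-1$. Suppose $e$ is discarded at level $j\le i$ because an edge $e'$ of weight at most $w(e)$ from $u$ to $v$'s level-$(j-1)$ cluster was added. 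Then $H$ contains a path $u \to$ (via $e'$) a vertex of $v$'s cluster $\to s_v^{j-1}\to v$ of length at most $w(e)+2(j-1)w(e)=(2j-1)w(e)\le(2i-1)w(e)$, using that cluster radii are bounded by $(j-1)w(e)$ under the monotonicity invariant.

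The main obstacle is exactly this weighted monotonicity: the radius of a cluster must be controlled in terms of $w(e)$ for the \emph{particular} edge $e$ being discussed, not in terms of $W_{\max}$. We would handle it by strengthening the invariant. If an edge $e$ incident to a level-$i$ cluster survives, then every edge used to attach vertices along the clustering paths of its endpoints has weight at most $w(e)$. The reason is that when a vertex joins a cluster via $e_u$, all of its strictly lighter inter-cluster edges are settled, so any surviving edge at $u$ has weight at least $w(e_u)$. Inducting over levels, each hop to the center costs at most $w(e)$, giving radius at most $i\cdot w(e)$. The final phase, level $k$ with $S_k=\emptyset$, settles all remaining edges, so every edge falls into alternative 1 for the level at which it was settled and alternative 2 for all lower levels. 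For each fixed $i$ we take whichever alternative applies, noting that alternative 1 at level $j\le i$ implies it at level $i$.
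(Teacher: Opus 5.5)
Your argument is correct and is the standard weighted Baswana--Sen analysis: cluster paths whose edges are bounded by the weight of any still-surviving incident edge, and stretch $2j-1$ for edges settled in phase $j$. The paper does not prove this statement but imports it by citation to \cite{baswana2007simple}, so your reconstruction is the argument behind that citation. Two small points to tighten: in the verbatim Baswana--Sen algorithm, intra-cluster edges deleted at the end of phase $j$ only get stretch $2j$ through the common center, so they are covered by alternative~2 at level $j$ and by alternative~1 only for $i>j$; and the last phase, where nothing is sampled, must be charged via $|S_{k-1}|=\tilde{O}(n^{1/k})$ rather than by your geometric sampling argument.
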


We also record the following observation which follows from standard concentration bounds and a union bound over $k$:

\begin{observation}
    With high probability for all $i \in \{0,\ldots k-1\}$, we have $|S_i|=\tilde{O}(n^{1-i/k})$.
\end{observation}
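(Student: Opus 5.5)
We need to prove the observation: with high probability, for all $i\in\{0,\dots,k-1\}$, $|S_i|=\tilde O(n^{1-i/k})$.

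The plan is a Chernoff bound on each level, followed by a union bound over the $k$ levels. Throughout, we read $p$ as $n^{-1/k}$ times a factor polylogarithmic in $n$. We condition level by level on the sampling. Given $S_{i-1}$, the set $S_i$ is obtained by keeping each element of $S_{i-1}$ independently with probability $p$. Hence $|S_i|$ is a sum of $|S_{i-1}|$ independent Bernoulli$(p)$ variables, with conditional mean $p|S_{i-1}|$.

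The key steps, carried out by induction on $i$, are as follows.

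\begin{enumerate}
\item \emph{Base case.} We have $|S_0|=n$.
\item \emph{Concentration at level $i$.} Suppose $|S_{i-1}|\le C_{i-1}\, n^{1-(i-1)/k}\,\mathrm{polylog}(n)$. The multiplicative Chernoff bound $\Pr[X\ge 2\mu]\le e^{-\mu/3}$ applies whenever the mean $\mu=p|S_{i-1}|$ is at least $c\log n$. In that case $|S_i|\le 2p|S_{i-1}|$ with probability $1-n^{-c/3}$. This gives the target $|S_i|=\tilde O(n^{1-i/k})$, since $p\cdot n^{1-(i-1)/k}=\tilde O(n^{1-i/k})$.
\item \emph{Small means.} If $\mu<c\log n$, the upper-tail Chernoff bound in the form $\Pr[X\ge \mu+t]\le e^{-t/3}$ for $t\ge\mu$ shows that $|S_i|\le 2c\log n$ with high probability. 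This is still $\tilde O(n^{1-i/k})$ for $i\le k-1$, because $n^{1-i/k}\ge n^{1/k}\ge 1$.
\item \emph{Union bound.} Each level fails with probability at most $n^{-\Omega(c)}$. Taking a union bound over the at most $k\le \log n$ levels (larger $k$ is irrelevant), all bounds hold simultaneously with probability $1-k\,n^{-\Omega(c)}$.
\end{enumerate}

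Only the upper tail matters for this statement, so no lower-tail control is needed.

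The main obstacle is bookkeeping: the constants $2^i$ from iterating the factor-$2$ Chernoff bound might seem to accumulate. Since $i\le k$, this contributes at most $2^k$. We avoid it by absorbing the factor into the $\tilde O$, or more cleanly by bounding directly rather than inductively. Each vertex lies in $S_i$ independently with probability exactly $p^i$, because membership requires surviving $i$ independent coin flips. So $|S_i|$ is a sum of $n$ independent Bernoulli$(p^i)$ variables with mean $np^i=\tilde O(n^{1-i/k})$, and a single Chernoff bound per level suffices. This sidesteps the accumulation entirely, so I will present this direct version.
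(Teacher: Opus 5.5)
Your route (one Chernoff bound per level plus a union bound over the $k$ levels, and in particular the direct view $|S_i|\sim\mathrm{Bin}(n,p^i)$) is what the paper's one-line justification (``standard concentration bounds and a union bound over $k$'') has in mind. However, one step fails as written. You read $p$ as $\lambda n^{-1/k}$ with $\lambda=\mathrm{polylog}(n)$, and then assert $np^i=\tilde O(n^{1-i/k})$. In fact $np^i=\lambda^i n^{1-i/k}$, and $\lambda^i$ is not polylogarithmic once $i$, and hence $k$, is superconstant. The paper does use that regime, letting $k$ grow up to $O(\log n)$. Your direct version removes the accumulated Chernoff constants, but not this $\lambda^i$. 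Likewise, in the inductive version the factor $2^k$ cannot be absorbed into $\tilde O$: for $k=\Theta(\log n)$ it is $n^{\Theta(1)}$.

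This is not merely an artifact of the proof. With a genuine polylogarithmic factor in $p$, say $\lambda=\log n$ and $k\approx\sqrt{\log n}$, the mean of $|S_{k-1}|$ is $n^{1/k}(\log n)^{k-1}$. That mean is large, so $|S_{k-1}|$ concentrates around it, and it exceeds $n^{1/k}\,\mathrm{polylog}(n)$. So under that reading the observation is false.

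The fix is to use the actual Baswana--Sen sampling probability $p=n^{-1/k}$; the $\tilde O$ in the paper's description of $p$ cannot be meant literally. Then $\mu_i:=\mathbb{E}|S_i|=n^{1-i/k}\ge n^{1/k}\ge 1$ for $i\le k-1$, and your direct argument goes through verbatim. The tail bound $\Pr[X\ge\mu+t]\le e^{-t/3}$ for $t\ge\mu$ gives $\Pr\bigl[|S_i|\ge 2\mu_i+3c\ln n\bigr]\le n^{-c}$. Hence $|S_i|\le 2n^{1-i/k}+O(c\log n)=\tilde O(n^{1-i/k})$, and a union bound over the $k\le n$ levels finishes. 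If $k$ is treated as a fixed constant, your proof is already fine as written.
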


\subsection{The $(k,k-1)$-spanner algorithm, and its analysis}

To construct the $(k,k-1)$-spanner we take the $(k,k-1)$-completion of the initial graph $H_0$ defined as follows: Run the Baswana-Sen algorithm \cite{baswana2007simple} that produces a $(2k-1)$-spanner $H$, and set $H_0\gets H$ as the initial graph for the greedy completion. Using clustering as an initialization step is natural, and common in spanner constructions, see e.g. \cite{knudsen2014additive,AhmedBSKS20,ElkinGN23}.

The analysis follows the high-level structure of Knudsen's original analysis, but requires several new adaptations. For simplicity we assume $k$ is odd, and set $R:=\frac{k-1}{2}$. Let $t_i$ be the number of edges added in the $i^{th}$ completion step. The analysis tracks distances only for pairs of \emph{centers}, denoted by $(s,s')$, with $s,s' \in S_{R}$. We now specify how we track the progress of the pair $(s,s')$:
\begin{itemize}
    \item(Set-off) The first iteration for which $s,s'$ are centers of vertices along $P_{x,y}$ where $x,y$ are the vertices chosen in the completion step. In \Cref{clm:post-setoff}, we show that in subsequent iterations after set-off $\dist_H(s,s')\leq k\dist_G(s,s')+O(k^2) W_{max}(G)$.

    \item(Distance improvement) These are iterations after the set-off iteration, and for which $\dist_H(s,s')-\dist_{H'}(s,s')= \Omega(W_{max}(G))$\footnote{As before, $H'$ is the state of $H$ after the completion step.}. To complement this definition, in \Cref{lem:distance-improvement} and \Cref{lem:edge-diff-cent} we show that if $t_i$ edges were added in the $i^{th}$ steps, then $\Omega(t_i)$ \emph{distinct} pairs of centers have their distance improved by $\Omega(W_{max}(G))$.

    \item(Finalized pair) A pair $(s,s')$ is finalized in the $i^{th}$ iteration, if $\dist_H(s,s') \leq k\dist_G(s,s')-k^2W_{max}(G)$, where $H$ is the state of the spanner in that iteration. The key property we use is \Cref{lem:finalized-no-improve}, which says that once a pair is finalized, it can't be part of the $\Omega(t_i)$ improved pairs of \Cref{lem:distance-improvement}.
\end{itemize}

We now explain the interplay between the above analysis and the designed completion. The bound we get on $\dist_H(s,s')$ after the set-off iteration follows since in each greedy step the bottlenecks are added, hence every contiguous union of segments has stretch $\leq k$. Since
$s,s'$ are not on $P_{x,y}$ but are still somewhat close, they inherit the small stretch from the attaching points, up to the additive slack mentioned.

The distance improvement of many pairs $(s,s')$ is quite natural, and follows since the pair $x,y$ chosen does not satisfy the stretch guarantee. A tricky part of the proof is to show $\Omega(t_i)$ of the pairs are distinct, as it may be the case that the number of centers that attach to $P_{x,y}$ is significantly smaller than $t_i$. We show that this cannot happen as otherwise we could have found a shortcut (in other words, the chosen segmentation is not minimal, violating the definition of the completion step). This part is based on adapting a lemma of \cite{AhmedBSKS20} from the additive case to the $(\alpha,\beta)$ set-up.

Finally, showing that finalized pairs are not part of the $\Omega(t_i)$ improving pairs again follows from the minimal segmentation property. \Cref{lem:finalized-no-improve} shows that whenever a finalized pair is one of the improving pairs considered in that iteration, one can make the segmentation smaller, contradicting minimality. We also highlight that here lies a technical difference between the additive spanners and $(\alpha,\beta)$ spanners. When $\alpha=1$ knowing $\dist_H(u,v)\leq \alpha \dist_G(u,v)$ implies $\dist_H(u,v) =  \dist_G(u,v)$ hence the distance in $H$ of $(u,v)$ cannot be improved due to a \emph{tautology}. On the other hand, when $\alpha>1$, pairs $(u,v)$ satisfying  $\dist_H(u,v)\leq \alpha \dist_G(u,v)$ for $H$ corresponding in some iteration can still be improved.

Finally, combining all the above we have that there are $\tilde{O}(|S_R|^2)=\tilde{O}(n^{1+1/k})$ pairs of centers, and each pair improves $O(k^2)$ times. Since one may assume $k=O(\log(n))$ as larger stretch does not improve sparsity, we have that $\tilde{O}(k^2n^{1+1/k})=\tilde{O}(n^{1+1/k})$ pairs of centers of $S_R$ improve in total. Concluding, we get that the number of edges added in the completion step is $ \sum_i t_i = \tilde{O}(n^{1+1/k})$, as needed.

\begin{remark}
    We highlight that one may replace the clustering of \cite{baswana2007simple} with the recent greedy clustering of \cite{PopovaTzalikITCS}, and obtain a deterministic algorithm, as well as a polylogarithmic improvement in the spanner size, though we do not take this route as it complicates notations. By optimizing the methods of this paper, it seems that one can obtain a $(k,k-1)$-spanner of size $O(kn^{1+1/k})$, but it is unclear how to get an optimal $O(n^{1+1/k})$.
\end{remark}

%% file: completion.tex
\section{The Algorithm}

Throughout, let $\G=(\V,\E)$ be a weighted graph, and let $\Hgraph\subseteq \G$ be a subgraph.
We assume that every edge of $\G$ is itself a shortest path between its endpoints; otherwise,
such an edge may be safely removed without affecting the result.

\subsection{Shortest paths and segmentations}

For vertices $x,y\in\V$, let $\Pxy=(x=x_0,x_1,\ldots,x_t=y)$ denote a shortest $x$--$y$ path in $\G$.

\begin{definition}[$\alpha$-segmentation]
An \emph{$\alpha$-segmentation} of the path $\Pxy$ is a sequence
\[
\mathcal{S}=([i_0,i_1],[i_1,i_2],\ldots,[i_{s-1},i_s]),
\qquad
0=i_0<i_1<\cdots<i_s=t,
\]
such that for every segment satisfying $i_k-i_{k-1}\ge 2$,
\[
\distH(x_{i_{k-1}},x_{i_k})
\;\le\;
\alpha\cdot \distG(x_{i_{k-1}},x_{i_k}).
\]
\end{definition}

\begin{definition}[minimal segmentation]
An $\alpha$-segmentation $S$ is minimal if no block of one or more consecutive segments of $S$
can be merged into a single segment so as to produce another $\alpha$-segmentation of the same path.
\end{definition}

Given a segmentation $\mathcal{S}$ of $P_{x,y}$, let $E(\mathcal{S})$ denote the set of all edges corresponding to length $1$ segments in $\mathcal{S}$.

\begin{definition}[Distant edges of a segmentation]
    Given an $\alpha$-segmentation $\mathcal{S}$ of $P_{x,y}$, denote by $E_{\alpha}(\mathcal{S},H) = \left\{\{u,v\}=e\in E(\mathcal{S}) \mid \dist_{H}(u,v)>\alpha w(e) \right\}$.
\end{definition}

\subsection{The $(\alpha,\beta)$-completion.}

We now formally describe the $(\alpha,\beta)$-completion algorithm. Initially, $H \gets H_0$:
\begin{enumerate}
    \item While $H$ is not an $(\alpha,\beta)$-spanner of $G$:
    \item Pick any $x,y$ such that $\dist_{H}(x,y) > \alpha \dist_{G}(x,y) + \beta W_{\max}(G)$.
    \item Find a \emph{minimal segmentation} $\mathcal{S}$ of $P_{x,y}$ (w.r.t. the current $H,G$).
    \item Update $H \gets H \cup E_{\alpha}(\mathcal{S},H)$.
\end{enumerate}

\begin{lemma}\label{lem-single-completion}
    Fix an iteration in the $(\alpha,\beta)$ completion for which $(x,y)$ is the chosen pair violating the desired stretch. Let $\mathcal{S}=([i_0,i_1],\ldots,[i_{s-1},i_s])$ be the chosen minimal segmentation.

    Then in the updated subgraph $H' = H \cup E_{\alpha}(\mathcal{S},H)$, for any $0 \leq \ell < r \leq s$ we have $\dist_{H'}(x_{i_\ell},x_{i_{r}})\leq \alpha \cdot \dist_G(x_{i_\ell},x_{i_{r}})$.
\end{lemma}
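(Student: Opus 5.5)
The plan is to bound $\dist_{H'}(x_{i_\ell},x_{i_r})$ by concatenating, segment by segment, a path in $H'$ for each segment $[i_{j-1},i_j]$ with $\ell<j\le r$. Then I will use that $P_{x,y}$ is a shortest path, so distances along it add up. Concretely, I will show that for every segment $j$,
\[
\dist_{H'}(x_{i_{j-1}},x_{i_j}) \le \alpha\cdot \dist_G(x_{i_{j-1}},x_{i_j}).
\]
By the triangle inequality in $H'$ this gives
\[
\dist_{H'}(x_{i_\ell},x_{i_r}) \le \sum_{j=\ell+1}^{r} \alpha\cdot \dist_G(x_{i_{j-1}},x_{i_j}) = \alpha\cdot \dist_G(x_{i_\ell},x_{i_r}),
\]
where the equality holds because subpaths of a shortest path are shortest paths. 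Hence $\sum_j \dist_G(x_{i_{j-1}},x_{i_j})$ equals the length of the subpath of $P_{x,y}$ between $x_{i_\ell}$ and $x_{i_r}$, which is $\dist_G(x_{i_\ell},x_{i_r})$.

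The per-segment bound splits into cases according to the type of segment.

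\textbf{Long segments} ($i_j-i_{j-1}\ge 2$). The definition of an $\alpha$-segmentation gives $\dist_H(x_{i_{j-1}},x_{i_j})\le \alpha\cdot\dist_G(x_{i_{j-1}},x_{i_j})$. Since $H\subseteq H'$, distances can only decrease, so the same bound holds in $H'$.

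\textbf{Length-one segments} (a single edge $e=\{x_{i_{j-1}},x_{i_j}\}$ of $P_{x,y}$, so $e\in E(\mathcal{S})$). There are two subcases.
\begin{itemize}
\item If $\dist_H(x_{i_{j-1}},x_{i_j})\le \alpha w(e)$, the bound holds already in $H$, and hence in $H'$.
\item Otherwise $e\in E_\alpha(\mathcal{S},H)$, so $e\in H'$. Then $\dist_{H'}\le w(e)\le \alpha w(e)$, using $\alpha\ge 1$.
\end{itemize}
In both subcases, by the standing assumption that every edge of $G$ is a shortest path between its endpoints, $w(e)=\dist_G(x_{i_{j-1}},x_{i_j})$.

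There is no real obstacle here; in particular, minimality of the segmentation is not needed for this lemma. The only points to state explicitly are the implicit assumption $\alpha\ge 1$, the fact that $H\subseteq H'$, and the additivity of $\dist_G$ along the shortest path $P_{x,y}$.
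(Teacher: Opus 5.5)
Your proof is correct and follows the paper's argument essentially verbatim. Like the paper, you prove the per-segment bound (long segments via the $\alpha$-segmentation definition and $H\subseteq H'$; distant single edges because they are added to $H'$ and $\alpha\ge 1$), then use the triangle inequality together with additivity of $\dist_G$ along the shortest path $P_{x,y}$. Your remark that minimality is not needed is also accurate, since the paper's proof does not use it either.
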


\begin{proof}
    We first prove that for any single segment $[i_j,i_{j+1}]$ we have $\dist_{H'}(x_{i_j},x_{i_{j+1}})\leq \alpha \cdot \dist_{G}(x_{i_j},x_{i_{j+1}})$. This implies that \[\dist_{H'}(x_{i_{\ell}},x_{i_r}) \leq \sum_{j=\ell}^{r-1} \dist_{H'}(x_{i_j},x_{i_{j+1}}) \leq \sum_{j=\ell}^{r-1} \alpha \cdot \dist_{G}(x_{i_j},x_{i_{j+1}}) = \alpha \cdot \dist_G(x_{i_\ell},x_{i_r}),\]
    where we use the equality $\sum_{j=\ell}^{r-1} \dist_{G}(x_{i_j},x_{i_{j+1}})= \dist_G(x_{i_\ell},x_{i_r})$ which follows since the vertices $(x_{i_\ell},x_{i_1},\ldots,x_{i_r})$ appear in order along the \emph{shortest} $x$-$y$ path in $G$. 

    To prove that $\dist_{H'}(x_{i_j},x_{i_{j+1}})\leq \alpha \cdot \dist_{G}(x_{i_j},x_{i_{j+1}})$ for all $j$, it suffices to consider segments $[i_j,i_{j+1}]$ for which $\dist_H(x_{i_j},x_{i_{j+1}})>\alpha \dist_{G}(x_{i_j},x_{i_{j+1}})$. By definition of $\alpha$-segmentation, such segments must correspond to single edges on $P_{x,y}$, which by definition are included in $E_{\alpha}(\mathcal{S},H)$. Hence if $\dist_H(x_{i_j},x_{i_{j+1}})>\alpha \dist_{G}(x_{i_j},x_{i_{j+1}})$ then $\dist_{H'}(x_{i_j},x_{i_{j+1}})\leq \dist_{G}(x_{i_j},x_{i_{j+1}})$, and the claim follows as $\alpha \geq 1$.
\end{proof}

\begin{corollary}
    The $(\alpha,\beta)$-completion algorithm terminates.
\end{corollary}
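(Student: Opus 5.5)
To prove that the $(\alpha,\beta)$-completion terminates, I will show that every iteration adds at least one new edge of $G$ to $H$. Since $H \subseteq G$ always holds and $G$ has finitely many edges, the number of iterations is then at most $|E(G)|$, and the loop must stop.

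\textbf{Step 1: the first two steps of each iteration are well defined.} Fix an iteration, with chosen pair $(x,y)$ and shortest path $P_{x,y}=(x_0,\ldots,x_t)$.
\begin{itemize}
\item A minimal segmentation of $P_{x,y}$ exists. The trivial segmentation into single edges $([0,1],\ldots,[t-1,t])$ is an $\alpha$-segmentation, since the condition in the definition is only imposed on segments of length at least $2$, so it holds vacuously.
\item Any merge of consecutive segments strictly decreases the number of segments. Hence repeatedly merging blocks while the result is still an $\alpha$-segmentation ends after finitely many merges, at a minimal segmentation $\mathcal{S}$.
\end{itemize}

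\textbf{Step 2: the iteration adds a new edge.} Suppose, for contradiction, that $E_{\alpha}(\mathcal{S},H)\setminus E(H)=\emptyset$, so that $H'=H$.
\begin{itemize}
\item Apply \Cref{lem-single-completion} with $\ell=0$ and $r=s$. This gives $\dist_{H}(x,y)=\dist_{H'}(x,y)\le \alpha\cdot \dist_G(x,y)$.
\item Since $\beta W_{\max}(G)\ge 0$, this contradicts the choice of $(x,y)$, which requires $\dist_H(x,y)>\alpha\dist_G(x,y)+\beta W_{\max}(G)$.
\end{itemize}
So some edge of $E_{\alpha}(\mathcal{S},H)$ is missing from $H$ and gets added.

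A cleaner way to see the same thing: any edge $e=\{u,v\}\in E_{\alpha}(\mathcal{S},H)$ satisfies $\dist_H(u,v)>\alpha w(e)\ge w(e)$, so $e\notin H$. Thus every edge of $E_\alpha(\mathcal{S},H)$ is new, and Step 2 only needs that $E_\alpha(\mathcal{S},H)$ is nonempty.

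\textbf{Step 3: conclusion.} Each iteration strictly increases $|E(H)|$, and $|E(H)|\le |E(G)|$ throughout. Therefore there are at most $|E(G)|$ iterations. When the loop exits, its condition is false, so $H$ is an $(\alpha,\beta)$-spanner.

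The only delicate point is the implicit claim that the update actually changes $H$. It is handled entirely by \Cref{lem-single-completion} applied to the endpoints of $P_{x,y}$, together with $\beta\ge 0$ and $\alpha\ge 1$. No other obstacle is expected.
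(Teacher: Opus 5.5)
Your proof is correct, and it rests on the same key step as the paper: applying \Cref{lem-single-completion} with $\ell=0$, $r=s$. The difference is the potential used to bound the number of iterations.

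The paper counts pairs. After an iteration, $\dist_{H'}(x,y)\le \alpha\,\dist_G(x,y)$. Since $H$ only grows and distances in $H$ only decrease, the pair $(x,y)$ can never be selected again. This bounds the number of iterations by $\binom{n}{2}$, which is the $O(|V|^2)$ figure used in the running-time remark.

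You count edges instead. You show every iteration adds at least one edge, for either of two reasons:
\begin{itemize}
\item otherwise $H'=H$, and the lemma would contradict the choice of $(x,y)$;
\item more directly, every $e=\{u,v\}\in E_\alpha(\mathcal{S},H)$ has $\dist_H(u,v)>\alpha w(e)\ge w(e)$, so $e\notin H$.
\end{itemize}
This bounds the number of iterations by $|E(G)|-|E(H_0)|$.

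What each route buys:
\begin{itemize}
\item Your version makes explicit that no update is vacuous. It also shows the number of iterations is at most the number of edges added. Combined with \Cref{thm:size-bound-greedy}, that gives $\tilde{O}(n^{1+1/k})$ iterations in the $(k,k-1)$ application, sharper than $O(n^2)$.
\item The paper's version gives the structural fact that each pair is chosen at most once.
\item You also handle existence of a minimal segmentation inside the proof. The paper defers this to its ``Computing'' paragraph.
\end{itemize}
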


\begin{proof}
    By the above lemma with $\ell=0, r=s$, we have $\dist_{H'}(x,y)\leq \alpha \dist_G(x,y)$, meaning that in all subsequent iterations $(x,y)$ satisfy the stretch requirement.
\end{proof}

\begin{observation}\label{obs:completion-correct}
    The output of the $(\alpha,\beta)$-completion is an $(\alpha,\beta)$-spanner of $G$.
\end{observation}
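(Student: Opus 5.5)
The plan is to read the observation directly off the loop condition of the $(\alpha,\beta)$-completion, using the preceding corollary for termination. The argument has three short steps.

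First, the procedure halts. This is exactly the content of the corollary following \Cref{lem-single-completion}. By that lemma with $\ell=0$ and $r=s$, the chosen violating pair $(x,y)$ satisfies $\dist_{H'}(x,y)\le \alpha\,\dist_G(x,y)$ after its iteration. Each iteration only adds edges, so $\dist_H(x,y)$ is non-increasing over time. Hence $(x,y)$ never violates the condition again, and no pair is chosen twice. Since there are at most $\binom{n}{2}$ pairs, the number of iterations is finite.

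Second, the output satisfies the stretch bound for every pair. The while-loop exits only when the guard ``$H$ is not an $(\alpha,\beta)$-spanner of $G$'' is false. Equivalently, it exits only when no pair $x,y$ with $\dist_H(x,y)>\alpha\,\dist_G(x,y)+\beta W_{\max}(G)$ exists. So every pair $u,v$ obeys the inequality of \Cref{def:alpha-beta-spanner}.

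Third, the output is a subgraph of $G$. Every edge ever added belongs to $E_\alpha(\mathcal{S},H)\subseteq E(\mathcal{S})$, and these are edges of the path $P_{x,y}$ in $G$. Together with $H_0\subseteq G$, this gives $H\subseteq G$.

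There is no real obstacle here. The only subtle point is termination, and it relies on monotonicity of $\dist_H$ under edge additions, which the corollary already handles.
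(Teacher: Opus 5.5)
Your proposal is correct and matches the paper's proof, which likewise reads the stretch bound off the loop's exit condition and takes termination from the preceding corollary. Your explicit check that every added edge lies on $P_{x,y}$, so that $H$ stays a subgraph of $G$, is a detail the paper leaves implicit.
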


\begin{proof}
    The algorithm terminates only when $H$ satisfies $\dist_H(x,y) \leq \alpha \dist_G(x,y) + \beta W_{\max}(G)$ for all pairs $x,y \in V$.
\end{proof}

\paragraph*{Computing the $(\alpha,\beta)$-completion.}

The $(\alpha, \beta)$-completion can be computed in polynomial time. By \Cref{lem-single-completion} the loop runs $O(|V|^2)$ iterations. For a fixed iteration it remains to show that a minimal segmentation can be found efficiently. This can be done as follows: 
Start with the trivial segmentation of $P_{x,y}$, and repeatedly merge a block of consecutive segments whenever their union has stretch at most $\alpha$ in $H$. When no further merge is possible, the resulting segmentation is minimal.


\subsection{The $(k,k-1)$ spanner}\label{sec:algorithm}







Run the Baswana-Sen algorithm on $G$, and let $H_0$ be the resulting spanner. Return the $(k,k-1)$ completion of $H_0$.

By \Cref{lem-single-completion}, the output graph is indeed a $(k,k-1)$-spanner, hence it remains to upper bound $|E(H)|$. The size of the initial $H_0$ is $\tilde{O}(n^{1+1/k})$, thus in order to prove \Cref{thm:main} it remains to prove:
\begin{theorem}\label{thm:size-bound-greedy}
    During the $(k,k-1)$-completion, at most $\tilde{O}(n^{1+1/k})$ edges are added.
\end{theorem}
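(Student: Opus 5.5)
The plan is a potential-function argument over pairs of Baswana--Sen centers, following the outline in the technical overview. Assume $k$ is odd and set $R=\frac{k-1}{2}$; the even case is handled the same way with asymmetric radii $\lfloor k/2\rfloor,\lceil k/2\rceil$. We may also assume $k=O(\log n)$, and we use the observation that $|S_R|^2=\tilde O(n^{2-2R/k})=\tilde O(n^{1+1/k})$. Write $W=W_{\max}(G)$. Fix an iteration with chosen pair $(x,y)$, minimal segmentation $\mathcal{S}$, and $t_i=|E_k(\mathcal{S},H)|$ added edges. Each added edge $e=\{u,v\}$ satisfies $\dist_H(u,v)>k\,w(e)\ge (2R+1)w(e)$. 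Since $2R+1=k\le 2k-1$, the case $i=R+1$ of \Cref{thm:baswana-sen} gives, for $H\supseteq H_0$, centers $s_u,s_v\in S_{R+1}\subseteq S_R$ with $\dist(u,s_u),\dist(v,s_v)\le (R+1)w(e)$. We may instead use level $R$ with radius $R\,w(e)$; the radii will be tuned so that stretch $k$ is inherited.

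The first step is the set-off bound. Suppose $s$ is attached to a vertex $x_a$ of $P_{x,y}$ and $s'$ to $x_b$, each at distance $O(k)W$. By \Cref{lem-single-completion}, the endpoints of segments containing $x_a,x_b$ satisfy stretch $k$ in $H'$. Then the triangle inequality gives $\dist_{H'}(s,s')\le k\dist_G(s,s')+O(k^2)W$. This holds forever after, since distances in $H$ only decrease.

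The second step is distance improvement. For each added edge $e_j$ we take the pair $(s_{u_j},s_{v_j})$ of its attached centers. Before the addition, if $\dist_H(s_{u_j},s_{v_j})$ were small, for instance at most $k\dist_G+(\text{slack})$, then routing through the centers would give a stretch-$k$ path over a block of consecutive segments around $e_j$. That would let us merge this block, contradicting minimality. The same contradiction shows that finalized pairs (those with $\dist_H\le k\dist_G-k^2W$) never appear among the improving pairs. So each such pair's $H$-distance drops by $\Omega(W)$.

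The third step, which I expect to be the main obstacle, is distinctness: we need $\Omega(t_i)$ of these pairs to be distinct. The approach is to adapt the \cite{AhmedBSKS20} lemma. If the same center $s$ is attached to two far-apart distant edges $e_j,e_{j'}$ on $P_{x,y}$, then the path through $s$ in $H_0$ connects points of the path. Since those edges have weight at least their local scale and $P_{x,y}$ is a shortest path, this path yields a shortcut of stretch at most $k$ over the segments between them, again contradicting minimality. I would first try charging each center to at most $O(1)$ distant edges, using the weight ordering: handle a heaviest edge first, so the radii are bounded by that edge's weight. I would also consider pairing the left center of $e_j$ with the right center of $e_{j+1}$ so that pairs are distinct.

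The final step is the accounting. After set-off, each pair has $\dist_H\le k\dist_G+O(k^2)W$, and it stops improving once $\dist_H\le k\dist_G-k^2W$. Each improvement costs $\Omega(W)$, so each pair improves only $O(k^2)$ times; the set-off itself is charged once. Summing over iterations gives $\sum_i\Omega(t_i)\le O(k^2)|S_R|^2=\tilde O(n^{1+1/k})$, which proves \Cref{thm:size-bound-greedy}.
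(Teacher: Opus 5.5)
\textbf{There is a genuine gap, in your second step (distance improvement).} Your overall architecture matches the paper's: set-off bound, $\Omega(W)$ improvements of non-finalized center pairs, distinctness, and $O(k^2)$ improvements per pair over $\tilde O(|S_R|^2)$ pairs. But the second step, which carries the whole argument, fails.

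\emph{Why the local pairs do not improve.} The pair $(s_{u_j},s_{v_j})$ attached to a single distant edge need not improve by $\Omega(W)$. The improvement is at most $\dist_H(s_{u_j},s_{v_j})$, and distantness only forces $\dist_H(s_{u_j},s_{v_j})>w(e_j)$. A value such as $2w(e_j)$ is compatible with $e_j$ being distant, since the detour through the centers costs up to $(k+1)w(e_j)>k\,w(e_j)$. Moreover, $w(e_j)$ can be far below $W$.

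\emph{Why minimality does not rescue it.} The route $u_j\to s_{u_j}\to s_{v_j}\to v_j$ has length up to $(k-1)w(e_j)+\dist_H(s_{u_j},s_{v_j})$. This certifies stretch $k$ only when $\dist_H(s_{u_j},s_{v_j})\le w(e_j)$, not under a bound like $k\dist_G+\text{slack}$. The same objection applies to pairing the left center of $e_j$ with the right center of $e_{j+1}$.

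\emph{Sanity check.} None of your steps uses that $(x,y)$ violates its stretch by an additive $\Omega(W)$; they use only minimality of $\mathcal{S}$. The same reasoning would then bound the $(k,0)$-completion, which outputs a multiplicative $k$-spanner. For $k=3$, a girth-$6$ graph with $\Theta(n^{3/2})$ edges is its own only $3$-spanner, so an $\tilde O(n^{4/3})$ bound is impossible there.

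\textbf{The missing idea is to anchor the improving pairs at the ends of the violating pair.}
\begin{enumerate}
\item Trim $x,y$ so that $x\in e_1$ and $y\in e_t$. The trimmed pair still violates, since the removed pieces are unions of segments of stretch $\le k$. Then $s_x,s_y$ exist with $\dist_H(x,s_x)\le R\,w(e_1)$ and $\dist_H(y,s_y)\le R\,w(e_t)$.
\item For each internal center $s_i\in\{s_{u_i},s_{v_i}\}$, \Cref{lem-single-completion} together with $R+1=k-R$ gives $\dist_{H'}(s_x,s_i)+\dist_{H'}(s_i,s_y)\le k\dist_G(x,y)-R(w(e_1)+w(e_t))$.
\item On the other hand, $\dist_H(x,y)\le R\,w(e_1)+\dist_H(s_x,s_i)+\dist_H(s_i,s_y)+R\,w(e_t)$. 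So if neither $(s_x,s_i)$ nor $(s_i,s_y)$ dropped by $W/2$, you would get $\dist_H(x,y)\le k\dist_G(x,y)+W$, contradicting the choice of $(x,y)$.
\end{enumerate}

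\textbf{Distinctness is then needed only for the single centers $s_i$, and your shortcut argument needs repair.} If $s$ is the left center of both $e_j$ and $e_{j'}$, you get $\dist_H(a_j,a_{j'})\le R(w(e_j)+w(e_{j'}))$. But $e_{j'}$ does not lie on $P_{x,y}[a_j,a_{j'}]$, so this is not yet a mergeable block. The paper restricts to pre-light internal edges and uses three occurrences, so that an added edge of weight $\ge w(e_{j'})$ lies strictly inside the subpath. This yields $\Omega(t)$ distinct left (or right) centers. After discarding $s_x,s_y$, it gives $\Omega(t)$ distinct pairs. These pairs are non-finalized by \Cref{lem:finalized-no-improve}, because their attachment points are not the two ends of one good segment.

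With these pairs in place of yours, your set-off bound and final accounting go through unchanged.
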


%% file: analysis.tex
\section{The Analysis}\label{sec:analysis}

\subsection{Setup and notation}

We consider a single greedy iteration in which the pair $(x,y)$ violates the stretch condition, and let $\mathcal{S}$ be the minimal segmentation of $P_{x,y}$ used in the completion step.

Let $e_1, e_2, \ldots, e_t$ denote the edges in $E_k(\mathcal{S}, H)$, ordered along $P_{x,y}$ from $x$ to $y$. Each edge $e_i = (u_i, v_i)$ has endpoints whose stretch in $H$ is greater than $k$, i.e., $\dist_H(u_i, v_i) > k \cdot w(e_i)$.

Without loss of generality, we assume $e_1$ is incident to $x$ (so $u_1 = x$) and $e_t$ is incident to $y$ (so $v_t = y$), otherwise set $x_{new}:=u_1$, and $y_{new}:=v_t$. By the property of the segmentation the path from $x$ to $x_{new}$, and $y$ to $y_{new}$ has stretch $\leq k$ so if $x$, $y$ was chosen by the greedy step, the pair $(x_{new},y_{new})$ would also be chosen by the greedy step, using the corresponding subsegmentation $\mathcal{S}'$ of $\mathcal{S}$ which has $E_k(\mathcal{S}',H)=E_k(\mathcal{S},H)$, hence we may analyze the addition of $E_k(\mathcal{S},H)$ w.r.t to the pair $(x_{new},y_{new})$ as the greedy step.
Hence we assume without loss of generality, that $x\in e_1$, and $y\in e_t$.

We focus on the case of $k$ odd, which is cleaner to describe. In the full version of the paper we include an appendix of the minor adaptations in the analysis required for $k$ even. Throughout we set $R:=\frac{k-1}{2}$, and $S:=S_R$ the set of centers. We write $W := W_{\max}(G)$ throughout this section. We assume $k >1$ as the case $k=1$ is trivial. We use the following notations for centers:
\begin{itemize}
    \item $s_x ,s_y $: the $R$-centers of $x,y$ respectively.
    \item For $1<i<t$: for each edge $e_i = (u_i, v_i)$, let $s_{u_i},s_{v_i}$ denote the $R$-centers of $u_i,v_i$, respectively.
\end{itemize}
All mentioned centers exist by using \Cref{thm:baswana-sen}, with $i=R+1$, and observing that a distant edge has stretch strictly bigger than $k$, hence the second alternative of \Cref{thm:baswana-sen} holds for all edges $e_i$. 

We also remark the following observation, which is relevant to the discussion in \Cref{sec:final-remarks}.

\begin{observation}\label{obs:one-missing-edge-stretch}
    Let $x,y$ be vertices, and $\mathcal{S}$ be a minimal segmentation of $P_{x,y}$. Assume $|E_k(\mathcal{S},H)|=1$, then $x,y$ have the desired stretch in $H$.
\end{observation}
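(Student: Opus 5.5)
The plan is a direct telescoping argument along $P_{x,y}$. We use that every segment of $\mathcal{S}$ except the unique distant edge already has stretch at most $k$ in $H$. The single distant edge is handled by the multiplicative guarantee of the initial Baswana-Sen graph. Throughout, $H$ is the current graph of the $(k,k-1)$-completion, so $H\supseteq H_0$, and "desired stretch" means $\dist_H(x,y)\le k\,\dist_G(x,y)+(k-1)W$.

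Let $\mathcal{S}=([i_0,i_1],\ldots,[i_{s-1},i_s])$ and let $e=\{u,v\}=\{x_{i_j},x_{i_{j+1}}\}$ be the unique edge of $E_k(\mathcal{S},H)$. We bound each segment separately.
\begin{enumerate}
\item A segment of length at least $2$ satisfies $\dist_H\le k\,\dist_G$ between its endpoints, by the definition of a $k$-segmentation.
\item A length-one segment other than $e$ is an edge $e'$ of $E(\mathcal{S})$ that is not in $E_k(\mathcal{S},H)$. Hence $\dist_H\le k\,w(e')$, which equals $k\,\dist_G$ of its endpoints, because edges are shortest paths.
\item For $e$ itself, \Cref{thm:baswana-sen} with $i=k$ gives either $\dist_{H_0}(u,v)\le(2k-1)w(e)$, or the existence of a $k$-level center. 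Taking $S_k=\emptyset$, as in the standard Baswana-Sen construction, rules out the second alternative. Since $H_0\subseteq H$, we get $\dist_H(u,v)\le (2k-1)w(e)$.
\end{enumerate}

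Next we sum the per-segment bounds using the triangle inequality, exactly as in the proof of \Cref{lem-single-completion}:
\[
\dist_H(x,y)\le k\,\dist_G(x,u)+(2k-1)w(e)+k\,\dist_G(v,y).
\]
The indices $i_\ell$ appear in order along a shortest path, so $\dist_G(x,u)+w(e)+\dist_G(v,y)=\dist_G(x,y)$. The right-hand side therefore equals $k\,\dist_G(x,y)+(k-1)w(e)$, which is at most $k\,\dist_G(x,y)+(k-1)W$. Note that minimality of $\mathcal{S}$ is not needed; any $k$-segmentation with a single distant edge suffices.

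The only point requiring care is step 3. We must make sure \Cref{thm:baswana-sen} really yields the $(2k-1)$-multiplicative guarantee for every edge at level $i=k$, that is, that the second alternative is vacuous there. If $S_k$ is not literally empty in the paper's convention, I would instead invoke directly the standard fact that the Baswana-Sen output $H_0$ is a $(2k-1)$-spanner, as stated in the overview. With that in hand, the rest is a one-line telescoping computation.
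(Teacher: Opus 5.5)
Your proof is correct and follows essentially the same route as the paper. Both bound $\dist_H(x,u)$ and $\dist_H(v,y)$ by $k$ times the corresponding $G$-distance using the segments of stretch at most $k$, bound the single distant edge by the $(2k-1)$ guarantee of the Baswana--Sen initialization $H_0\subseteq H$, and telescope along the shortest path to get $k\,\dist_G(x,y)+(k-1)w(e)\le k\,\dist_G(x,y)+(k-1)W$. Your remark that minimality of $\mathcal{S}$ is never used is accurate; the paper's proof does not invoke it either.
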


\begin{proof}
    Let $e=\{u,v\}$ be the edge in $E_k(\mathcal{S},H)$, and assume without loss of generality, $u$ is closer to $x$ and $v$ is closer to $y$ along $P_{x,y}$. Then: 
    \begin{align*}
\dist_H(x,y)
&\leq \dist_H(x,u) + \dist_H(u,v) + \dist_H(v,y) \\
&\leq k\cdot \dist_G(x,u) + (2k-1)\dist_G(u,v) + k\cdot \dist_G(v,y) \\
&= k\bigl(\dist_G(x,u) + \dist_G(u,v) + \dist_G(v,y)\bigr) + (k-1)\dist_G(u,v) \\
&= k\cdot \dist_G(x,y) + (k-1)w(e).
\end{align*}
\end{proof}

\subsection{Set-off and post set-off Bounds}

\begin{definition}[Set-off]\label{def:set-off}
    A pair of centers  $(s,s')$ is \emph{set off} at the first greedy iteration in which both $s$ and $s'$ serve as centers of some vertices in $\bigcup_{i=1}^{t} V(e_i)$.
\end{definition}

\begin{claim}[Post set-off bounds]\label{clm:post-setoff}
    When a pair of centers $(s,s')$ is set off in an iteration updating $H$ to $H'$, we have:
    \[
    \dist_{H'}(s,s') \leq k \cdot \dist_G(s,s') + (k^2 - 1) W.
    \]
\end{claim}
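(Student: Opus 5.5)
Since $(s,s')$ is set off in this iteration, there are vertices $a,b\in\bigcup_{i=1}^t V(e_i)$ with $s$ the $R$-center of $a$ and $s'$ the $R$-center of $b$. Every such vertex is an endpoint of a length-one segment of the minimal segmentation $\mathcal{S}$, so $a=x_{i_\ell}$ and $b=x_{i_r}$ for some segment boundaries. The plan is to route from $s$ to $a$ inside $H$, then from $a$ to $b$ in $H'$, then from $b$ to $s'$, and to use the triangle inequality in $G$ to compare with $\dist_G(s,s')$.

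\textbf{Step 1 (center distances).} \Cref{thm:baswana-sen} with level $R$ gives $\dist(a,s)\le R\cdot w(e)\le RW$, where $e$ is the distant edge containing $a$; the same bound holds for $b$. The clustering of \cite{baswana2007simple} realizes the center attachment through cluster trees contained in $H_0\subseteq H\subseteq H'$. Hence this distance holds in $H'$ as well: $\dist_{H'}(a,s)\le RW$ and $\dist_{H'}(b,s')\le RW$. I expect to spell this out as a property of the Baswana--Sen clusters rather than of $G$-distance alone, since the statement of \Cref{thm:baswana-sen} leaves the metric implicit.

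\textbf{Step 2 (middle part).} If $a=b$ the middle term is zero. Otherwise, swapping the roles of $s$ and $s'$ if needed so that $\ell<r$, \Cref{lem-single-completion} gives
\[
\dist_{H'}(a,b)\le k\cdot\dist_G(a,b).
\]
The triangle inequality in $G$ then gives $\dist_G(a,b)\le \dist_G(a,s)+\dist_G(s,s')+\dist_G(s',b)\le \dist_G(s,s')+2RW$.

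\textbf{Step 3 (combine).} Putting the pieces together,
\[
\dist_{H'}(s,s')\le RW + k\bigl(\dist_G(s,s')+2RW\bigr)+RW = k\,\dist_G(s,s')+(2kR+2R)W.
\]
With $R=\frac{k-1}{2}$ we have $2R(k+1)=(k-1)(k+1)=k^2-1$, which is exactly the claimed bound.

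The main obstacle is Step 1: justifying that the $H$-distance, and not merely the $G$-distance, from a vertex to its level-$R$ center is at most $R\cdot w(e)$ for an incident distant edge $e$. If only $G$-distance is available, the $H$-distance would pick up an extra multiplicative factor and the additive term would grow beyond $k^2-1$. I would resolve this by invoking the explicit cluster-tree structure of \cite{baswana2007simple}: each vertex joins a level-$i$ cluster via an edge of weight at most the current edge, and all these edges are placed in $H_0$.
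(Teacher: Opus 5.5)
Your proposal is correct and follows the paper's proof essentially step for step: the triangle inequality along $s\to a\to b\to s'$ in $H'$, the Baswana--Sen bound $RW$ for each center attachment, \Cref{lem-single-completion} for $\dist_{H'}(a,b)\le k\,\dist_G(a,b)$, and the triangle inequality in $G$ giving $\dist_G(a,b)\le \dist_G(s,s')+(k-1)W$. Your extra care that the attachment distance is realized inside $H_0\subseteq H'$ (via the cluster trees) is exactly what the paper uses implicitly when it invokes ``the Baswana--Sen invariant.''
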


\begin{proof}
    By definition of set-off, there exist $a, b \in \bigcup_{i=1}^{t} V(e_i)$ such that $s,s'$ are their centers respectively. By the triangle inequality in $H'$:
    \[
    \dist_{H'}(s,s') \leq \dist_{H'}(s, a) + \dist_{H'}(a, b) + \dist_{H'}(b, s').
    \]
    By the Baswana-Sen invariant, $\dist_{H'}(s, a) \leq R \cdot W$ and $\dist_{H'}(b, s') \leq R\cdot W$, giving $\dist_{H'}(s, a) + \dist_{H'}(b, s') \leq (k-1)W$. Since $a, b$ are endpoints of edges in $E_k(\mathcal{S}, H)$, they are segment endpoints in $\mathcal{S}$, so by \Cref{lem-single-completion}, $\dist_{H'}(a, b) \leq k \cdot \dist_G(a, b)$. Hence:
    \[
    \dist_{H'}(s,s') \leq  k \cdot \dist_G(a, b) + (k-1)W .
    \]
    Since $H' \subseteq G$, we have $\dist_G(a, s), \dist_G(b, s') \leq R \cdot W$, so by triangle inequality in $G$: $\dist_G(a, b) \leq \dist_G(a,s)+\dist_G(s,s')+\dist_G(s',b) \leq \dist_G(s,s') + (k-1)W$. Substituting:
    \[
    \dist_{H'}(s,s') \leq  k(\dist_G(s,s') + (k-1)W) + (k-1)W = k \cdot \dist_G(s,s') + (k^2-1)W. \qedhere
    \]
\end{proof}

We also have:
\begin{definition}[Improved pair]\label{def:improved-pair}
    A pair of centers $(s,s')$ is \emph{improved} in a greedy iteration that updates $H$ to $H'$ if it is either \emph{set-off} in that iteration, \emph{or:}
    \[
    \dist_H(s,s') - \dist_{H'}(s,s') \geq \tfrac{1}{2}W.
    \]
\end{definition}

\begin{remark}\label{rmk:choice-of-param}
    The choice of $\tfrac{1}{2}$ in the definition above is arbitrary, and one may choose any smaller constant and the result continues to hold, with a slight increase to the spanner size.
\end{remark}

\begin{claim} \label{clm:center-dist-sum-after-addition}
    Consider a greedy iteration with chosen pair $(x,y)$, a minimal segmentation $\mathcal{S}$, and edges $e_1, \ldots, e_t$ added, updating $H$ to $H'$. For each $1<i<t$ and $s_i \in \{s_{u_i}, s_{v_i}\}$ we have:
    \[\dist_{H'}(s_x,s_i)+\dist_{H'}(s_i,s_y) \leq k \dist_G(x,y) - R(w(e_1)+w(e_t)).\]
\end{claim}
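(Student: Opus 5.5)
The plan is to bound each of the two terms by routing through the vertex $a\in\{u_i,v_i\}$ whose center is $s_i$, and walking along $P_{x,y}$ in $H'$. The key observation is that the three added edges $e_1,e_i,e_t$ now lie in $H'$, so each can be traversed at cost $w(e)$ rather than $k\cdot w(e)$. Each such edge therefore saves $(k-1)w(e)=2R\,w(e)$ compared with a stretch-$k$ bound. These savings pay for the detours to the centers: $R\,w(e_1)$ for $s_x$, $R\,w(e_t)$ for $s_y$, and $2R\,w(e_i)$ for $s_i$, which is counted in both terms. That leaves a net gain of $R(w(e_1)+w(e_t))$.

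First I record the detour bounds. As in the setup, $s_x,s_y,s_{u_i},s_{v_i}$ are $R$-centers obtained from \Cref{thm:baswana-sen}, and the center paths lie in $H_0\subseteq H'$. This gives $\dist_{H'}(x,s_x)\le R\,w(e_1)$, $\dist_{H'}(y,s_y)\le R\,w(e_t)$ and $\dist_{H'}(a,s_i)\le R\,w(e_i)$. I would double-check that the index convention of the invariant really gives the factor $R$ and not $R+1$ here, since the whole computation is tight.

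Next, the path segments. By the WLOG reduction $x=u_1$ and $y=v_t$. Every endpoint of $e_1,e_i,e_t$ is a segment endpoint of $\mathcal{S}$, so \Cref{lem-single-completion} gives stretch $\le k$ in $H'$ between any two of them. Suppose $a=u_i$; the case $a=v_i$ is symmetric. Then
\[
\dist_{H'}(x,a)\le w(e_1)+k\,\dist_G(v_1,u_i)=k\,\dist_G(x,u_i)-2R\,w(e_1),
\]
because $v_1$ precedes $u_i$ on the shortest path. Here $v_1=u_i$ is allowed and only contributes a zero term. Similarly,
\[
\dist_{H'}(a,y)\le w(e_i)+k\,\dist_G(v_i,u_t)+w(e_t)=k\,\dist_G(u_i,y)-2R\,w(e_i)-2R\,w(e_t).
\]
These use $1<i<t$, so that $e_i$ is distinct from $e_1$ and $e_t$ and the vertices appear in the order $x,v_1,\dots,u_i,v_i,\dots,u_t,y$ along $P_{x,y}$.

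Finally, I sum. By the triangle inequality,
\[
\dist_{H'}(s_x,s_i)+\dist_{H'}(s_i,s_y)\le \dist_{H'}(s_x,x)+\dist_{H'}(x,a)+\dist_{H'}(a,y)+\dist_{H'}(y,s_y)+2\,\dist_{H'}(a,s_i).
\]
Because $a$ lies on the shortest path, $\dist_G(x,a)+\dist_G(a,y)=\dist_G(x,y)$. Plugging in the bounds, the $\pm 2R\,w(e_i)$ terms cancel and exactly $-R(w(e_1)+w(e_t))$ remains, as claimed. The only delicate step is this bookkeeping: making sure each added edge is counted at weight $w(e)$ exactly once in the correct term, and that the $e_i$ saving is placed on the side of $a$ that actually traverses $e_i$.
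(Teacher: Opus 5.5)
Your argument is correct and is essentially the paper's proof. It routes $s_x\to x\to a\to s_i$ and $s_i\to a\to y\to s_y$ in $H'$, pays $w(e_1),w(e_i),w(e_t)$ for the added edges, and bounds $P_{x,y}[v_1,u_i]$ and $P_{x,y}[v_i,u_t]$ by \Cref{lem-single-completion}. It then closes with $2R+1=k$ and $w(e_1)+\dist_G(v_1,u_t)+w(e_t)=\dist_G(x,y)$; you only phrase the bookkeeping as a saving of $2R\,w(e)$ per added edge.

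On your worry about the factor: the paper uses the same detour bound $R\,w(e)$. This is exactly what \Cref{thm:baswana-sen} gives with $i=R$, since a distant edge has stretch $>k>2R-1$, so the second alternative must hold.
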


\begin{proof}

Assume without loss of generality $s_i$ is the center of $u_i$, the symmetric case is handled similarly.
Since $e_1$ is added to $H'$ during the completion phase, and $P_{x,y}[v_1,u_i]$ is a contiguous union of segments of $\mathcal{S}$, we can apply \Cref{lem-single-completion} to get $\dist_{H'}(v_1,u_i)\leq k \dist_{G}(v_1,u_i)$, hence by considering the path $s_x \to x \to u_i \to s_i$ and get:
\[\dist_{H'}(s_x,s_i) \leq R w(e_1) +w(e_1) +k \dist_{G}(v_1,u_i) + R w(e_i) \]

We can apply the same analysis to $s_i$ and $s_y$ via the path $s_i \to u_i \to v_i \to u_t \to v_t \to s_y$ (since we assumed $s_i$ is the center of $u_i$), and obtain:

\[\dist_{H'}(s_i,s_y) \leq R w(e_i) +w(e_i) +k \dist_{G}(v_i,u_t) + (R+1) w(e_t) \]

In total, we have in $H'$:

\begin{equation}\label{eq:equation-sum-dist-after}
    \dist_{H'}(s_x,s_i)+\dist_{H'}(s_i,s_y) \leq (R+1) (w(e_1)+w(e_t)) + k \dist_G(v_1,u_t).
\end{equation}

Since $P_{x,y}$ is a shortest $x$ to $y$ path we have $w(e_1)+\dist_G(v_1,u_t)+w(e_t)=\dist_G(x,y)$. Combining this with \Cref{eq:equation-sum-dist-after} and the fact that $R+1=k-R$ implies the claim.

\end{proof}

\begin{lemma}[Distance improvement]\label{lem:distance-improvement}
    Consider a greedy iteration with chosen pair $(x,y)$, a minimal segmentation $\mathcal{S}$, and edges $e_1, \ldots, e_t$ added, updating $H$ to $H'$. For each $1<i<t$ we have for $s_i \in \{s_{u_i}, s_{v_i}\}$:
    \begin{enumerate}
        \item $\dist_H(s_x, s_i) - \dist_{H'}(s_x, s_i) \geq  \tfrac{1}{2}W$, or
        \item $\dist_H(s_{v_i}, s_y) - \dist_{H'}(s_{v_i}, s_y) \geq  \tfrac{1}{2}W$.
    \end{enumerate}
\end{lemma}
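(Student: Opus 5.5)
We argue by contradiction, fixing $1<i<t$ and $s_i\in\{s_{u_i},s_{v_i}\}$, and supposing that both alternatives fail:
\[
\dist_H(s_x,s_i)<\dist_{H'}(s_x,s_i)+\tfrac12W
\quad\text{and}\quad
\dist_H(s_{v_i},s_y)<\dist_{H'}(s_{v_i},s_y)+\tfrac12W .
\]
The plan is to use these two inequalities to exhibit an $x$--$y$ walk in the \emph{old} graph $H$ of length at most $k\dist_G(x,y)+(k-1)W$. This would contradict the choice of $(x,y)$ as a violating pair, since $\dist_H(x,y)>k\dist_G(x,y)+(k-1)W$.

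The walk is $x\to s_x\to s_i\to s_{v_i}\to s_y\to y$, with the following pieces.
\begin{itemize}
\item The two end pieces cost at most $R\,w(e_1)$ and $R\,w(e_t)$, by the Baswana-Sen invariant.
\item The two center-to-center pieces are bounded by the negated inequalities above, which leaves $H'$-distances plus $W$ in total.
\item The middle piece $s_i\to s_{v_i}$ is empty when $s_i=s_{v_i}$.
\end{itemize}
The $H'$-distances are then bounded exactly as in the proof of \Cref{clm:center-dist-sum-after-addition}: use the paths through $x,v_1,\dots,u_i$ (respectively $v_i,\dots,u_t,y$), the fact that $e_1,e_t\in H'$, and \Cref{lem-single-completion} applied to the contiguous unions of segments $P_{x,y}[v_1,u_i]$ and $P_{x,y}[v_i,u_t]$.

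In the case $s_i=s_{v_i}$, the computation is exactly \Cref{clm:center-dist-sum-after-addition}, and it reduces to
\[
k\dist_G(x,y)-R\bigl(w(e_1)+w(e_t)\bigr)+W .
\]
Adding the end pieces $R(w(e_1)+w(e_t))$ gives a walk of length at most $k\dist_G(x,y)+W\le k\dist_G(x,y)+(k-1)W$, using $k\ge 2$. This is the desired contradiction.

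The case $s_i=s_{u_i}$ is where I expect the main obstacle. Here the walk must cross the still-missing edge $e_i$ inside $H$ in the step $s_{u_i}\to s_{v_i}$. I would first try to bound $\dist_H(s_{u_i},s_{v_i})$ via the Baswana-Sen structure: $H_0\subseteq H$ is a $(2k-1)$-spanner, and since $v_i$ is adjacent to the cluster of $u_i$, $v_i$ has an $H_0$-edge of weight at most $w(e_i)$ into that cluster.

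However, a direct accounting then picks up an additive term of order $k\,w(e_i)$ beyond $k\dist_G(x,y)$. This is because the $H'$ bound for $s_x\to s_{u_i}$ already pays $R\,w(e_i)$ to reach $s_{u_i}$, and the walk pays it again to return to $u_i$. So the naive walk is too long.

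To fix this, I would not route through $s_{u_i}$ in the old graph at all. The idea is to use the failure of item 1 only to certify that $\dist_H(s_x,s_{u_i})$ is close to its $H'$-value, and then replace the $H'$-witness path $s_x\to x\to\cdots\to u_i\to s_{u_i}$ by an $H$-walk that stops at $u_i$ rather than at $s_{u_i}$. The detour via $s_{u_i}$ should then cancel against the $R\,w(e_i)$ term in the $H'$ bound, up to the $\tfrac12W$ slack. From $u_i$ the walk would continue to $v_i$ by a cluster edge, go from $v_i$ to $s_{v_i}$, and then use the failure of item 2 as before.

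If this cancellation does not close, the fallback is to show that a cheap $H$-path around $e_i$ would let $e_i$ be absorbed into a merged segment, contradicting minimality of $\mathcal{S}$. In that case, $t_i$'s distant-edge status forces the improvement on the $s_x$ side directly.
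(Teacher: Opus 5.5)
Your case $s_i=s_{v_i}$ is correct and coincides with the paper's proof. The triangle inequality through $s_i$, the two negated items and \Cref{clm:center-dist-sum-after-addition} give $\dist_H(s_x,s_y)<k\dist_G(x,y)-R\bigl(w(e_1)+w(e_t)\bigr)+W$. The end pieces then give $\dist_H(x,y)<k\dist_G(x,y)+W\le k\dist_G(x,y)+(k-1)W$.

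The case $s_i=s_{u_i}$, however, is left unproved, and neither suggested repair works. The ``cancellation'' idea fails because the $H'$-witness path from $x$ to $u_i$ uses $e_1$, and in general the other newly added edges before $e_i$, none of which is in $H$ (every edge of $E_k(\mathcal{S},H)$ is absent from $H$). So knowing that the number $\dist_H(s_x,s_{u_i})$ is close to $\dist_{H'}(s_x,s_{u_i})$ gives you no $H$-walk reaching $u_i$ other than through $s_{u_i}$. The $R\,w(e_i)$ already inside the $H'$ bound and the $R\,w(e_i)$ needed to return to $u_i$ therefore add rather than cancel. Together with the cost of crossing $e_i$ in $H$, you are left with the excess of order $k\,w(e_i)$ that you computed yourself, which does not contradict the choice of $(x,y)$ in general. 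The minimality fallback is also empty. $e_i\in E_k(\mathcal{S},H)$ means by definition that $\dist_H(u_i,v_i)>k\,w(e_i)$. Absorbing $e_i$ into a merged block would need an $H$-path of stretch at most $k$ between the block's endpoints, but every bound your walk produces carries a positive additive term.

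The missing observation is that item 2 of the statement is misprinted. It should read $\dist_H(s_i,s_y)-\dist_{H'}(s_i,s_y)\ge\tfrac12W$, with the same $s_i$ as in item 1. This is the form the paper's proof uses: it bounds $\dist_H(s_x,s_y)\le\dist_H(s_x,s_i)+\dist_H(s_i,s_y)$ and then invokes \Cref{clm:center-dist-sum-after-addition}, which controls exactly $\dist_{H'}(s_x,s_i)+\dist_{H'}(s_i,s_y)$ for both choices of $s_i$. It is also the form \Cref{cor:improve-not-final-each-round} needs, since there each distinct left center $s_{u_i}$ must itself be paired with $s_x$ or $s_y$.

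With this reading your $s_i=s_{v_i}$ argument applies verbatim to $s_i=s_{u_i}$. The claim's path $s_{u_i}\to u_i\to v_i\to u_t\to y\to s_y$ crosses $e_i$ inside $H'$, where $e_i$ is present, and the failure of item 2 transfers that bound to $H$ at a cost of only $\tfrac12W$. Your walk in $H$ is then $x\to s_x\to s_i\to s_y\to y$, which never crosses $e_i$, so the obstacle you identified does not arise.
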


\begin{proof}
    Assume towards contradiction both items do not hold. Then by \Cref{clm:center-dist-sum-after-addition} we have:
    \[\dist_H(s_x,s_y)\leq k\dist_G(x,y) - R(w(e_1)+w(e_t))+W.\]

    By the Baswana-Sen invariant we have $\dist_H(x,s_x)\leq Rw(e_1)$, and $\dist_H(s_y,y)\leq Rw(e_t)$, hence by the triangle inequality:
    \[\dist_H(x,y)\leq k\dist_G(x,y) +W.\]
    Which contradicts the assumption that $(x,y)$ were chosen during the completion phase.
\end{proof}
The following observation is proved in exactly the same way as \Cref{clm:center-dist-sum-after-addition} and \Cref{lem:distance-improvement}, hence omitted.
\begin{observation}\label{obs:farth-cent-pair-improv}
    The pair of centers $(s_x,s_y)$, satisfies:
    \begin{itemize}
        \item $\dist_{H'}(s_x,s_y) \leq k \dist_G(x,y) - R(w(e_1)+w(e_t))$.
        \item $\dist_H(s_x, s_y) - \dist_{H'}(s_x, s_y) \geq  \tfrac{1}{2}W.$
    \end{itemize}
    
\end{observation}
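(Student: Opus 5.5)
The plan is to mirror \Cref{clm:center-dist-sum-after-addition} and \Cref{lem:distance-improvement}, with the intermediate center $s_i$ removed so that we route directly from $s_x$ to $s_y$ through the path.

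\textbf{First item.} I would bound $\dist_{H'}(s_x,s_y)$ via the walk $s_x \to x \to v_1 \to u_t \to y \to s_y$ in $H'$ and control each piece separately.
\begin{itemize}
\item The piece $s_x\to x$ costs at most $R\,w(e_1)$ and the piece $y\to s_y$ costs at most $R\,w(e_t)$; both bounds come from the Baswana--Sen invariant (\Cref{thm:baswana-sen} with $i=R+1$, applied to the distant edges $e_1,e_t$).
\item The edges $e_1$ and $e_t$ lie in $E_k(\mathcal{S},H)\subseteq H'$, so they contribute $w(e_1)+w(e_t)$.
\item The subpath $P_{x,y}[v_1,u_t]$ is a contiguous union of segments of $\mathcal{S}$, since $v_1$ and $u_t$ are segment endpoints. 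So \Cref{lem-single-completion} gives $\dist_{H'}(v_1,u_t)\le k\,\dist_G(v_1,u_t)$.
\end{itemize}
Summing gives
\[
\dist_{H'}(s_x,s_y)\le (R+1)(w(e_1)+w(e_t))+k\,\dist_G(v_1,u_t).
\]
Because $P_{x,y}$ is a shortest path, $\dist_G(v_1,u_t)=\dist_G(x,y)-w(e_1)-w(e_t)$. Using $k-(R+1)=R$, the right-hand side becomes $k\,\dist_G(x,y)-R(w(e_1)+w(e_t))$, which is the first item.

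One degenerate case needs a check. If $t=1$, then $e_1=e_t$ and the path is a single edge. The same walk still works with $x\to y$ taken as that one edge, and the bound only improves. (By \Cref{obs:one-missing-edge-stretch} this case cannot arise in a completion step anyway.)

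\textbf{Second item.} I would argue by contradiction. Suppose $\dist_H(s_x,s_y)<\dist_{H'}(s_x,s_y)+\tfrac12 W$. Then the first item gives
\[
\dist_H(s_x,s_y)< k\,\dist_G(x,y)-R(w(e_1)+w(e_t))+\tfrac12 W.
\]
Now go from $x$ to $y$ in $H$ through the centers, adding $\dist_H(x,s_x)\le R\,w(e_1)$ and $\dist_H(s_y,y)\le R\,w(e_t)$. This yields
\[
\dist_H(x,y)<k\,\dist_G(x,y)+\tfrac12 W\le k\,\dist_G(x,y)+(k-1)W,
\]
where the last step uses $k\ge 2$. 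This contradicts the choice of $(x,y)$ as a violating pair in the $(k,k-1)$-completion.

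\textbf{Main obstacle.} The only delicate point is that the center paths must exist in $H$ and not just in $G$. The Baswana--Sen clustering edges belong to $H_0\subseteq H\subseteq H'$, so the invariant bounds hold in both $H$ and $H'$, as was already used in \Cref{lem:distance-improvement}. We also rely on the WLOG reduction from the setup, which puts $x\in e_1$ and $y\in e_t$.
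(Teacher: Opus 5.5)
Your proof is correct and is the argument the paper intends: route $s_x\to x\to v_1\to u_t\to y\to s_y$ in $H'$ as in \Cref{clm:center-dist-sum-after-addition}, then derive the contradiction through the centers as in \Cref{lem:distance-improvement}. One small correction: when $t=1$ the bound does not ``only improve'' (your walk costs $k\,w(e_1)$, while the target is $k\,w(e_1)-2R\,w(e_1)=w(e_1)$), so that case has to be ruled out rather than covered, which your parenthetical appeal to \Cref{obs:one-missing-edge-stretch} correctly does.
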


We now define the finalized pairs.

\begin{definition}[Finalized pair]\label{def:finalized}
    A pair of centers $(s,s') $ is \emph{finalized} in $H$ if
    \[
    \dist_{H}(s,s') \leq k \cdot \dist_G(s,s') - k^2 W.
    \]
\end{definition}

\begin{lemma}[Finalized pairs do not participate in completion]\label{lem:finalized-no-improve}
    Consider a greedy completion iteration with chosen pair $(x,y)$ and edges $e_1, \ldots, e_t$. Let $a,b$ be vertices in $\bigcup_{i=1}^{t} V(e_i)$, and let $s,s'$ be their centers respectively. Then if $(s,s')$ is finalized then $a,b$ are the two endpoints of a segment of $\mathcal{S}$ that has stretch $\leq k$.
\end{lemma}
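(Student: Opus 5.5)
Assume $(s,s')$ is finalized in $H$, i.e. $\dist_H(s,s')\le k\,\dist_G(s,s')-k^2W$. The plan is to show that $a,b$ can be joined by a path in the \emph{current} $H$ (before the update) of stretch at most $k$. We may order $a,b$ along $P_{x,y}$ so that $a$ comes first, and assume $a\neq b$; otherwise the claim is vacuous. Let $\mathcal{S}'$ be the segmentation obtained from $\mathcal{S}$ by merging all segments between $a$ and $b$ into the single segment $[a,b]$. Both $a$ and $b$ are endpoints of edges in $E_k(\mathcal{S},H)$, so they are segment endpoints of $\mathcal{S}$ and this merge makes sense. Then $\mathcal{S}'$ is again a $k$-segmentation, so minimality of $\mathcal{S}$ forces the block between $a$ and $b$ to consist of a single segment. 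That is exactly the conclusion: $a,b$ are the two endpoints of one segment of $\mathcal{S}$. Its stretch is $\le k$ in $H$ either because it is a segment of length $\ge 2$, or because we have just exhibited such a path when it is an edge; this case needs separate care, discussed below.

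The key step is the bound on $\dist_H(a,b)$. First go from $a$ to $s$, then $s$ to $s'$, then $s'$ to $b$. By the Baswana-Sen invariant (\Cref{thm:baswana-sen} with $i=R+1$, as used in the setup), $\dist_H(a,s)\le R\,W$ and $\dist_H(b,s')\le R\,W$. Together with finalization this gives
\[
\dist_H(a,b)\le k\,\dist_G(s,s')-k^2W+(k-1)W .
\]
Next compare $\dist_G(s,s')$ with $\dist_G(a,b)$ using the triangle inequality in $G$, exactly as in \Cref{clm:post-setoff}: $\dist_G(s,s')\le \dist_G(a,b)+(k-1)W$. Substituting,
\[
\dist_H(a,b)\le k\,\dist_G(a,b)+k(k-1)W-k^2W+(k-1)W=k\,\dist_G(a,b)-W< k\,\dist_G(a,b).
\]
Since $a,b$ lie on the shortest path $P_{x,y}$, $\dist_G(a,b)$ is the length of the subpath between them. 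So $[a,b]$ is a legal merged segment and $\mathcal{S}'$ is a $k$-segmentation.

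The main obstacle is the degenerate case in which the block between $a$ and $b$ is already a single segment consisting of one edge. Merging is then not a change, so minimality gives nothing, yet the estimate still shows $\dist_H(a,b)<k\,w(e)$. Such an edge cannot be in $E_k(\mathcal{S},H)$; and if $\{a,b\}=e_i$ were a distant edge, we would get a contradiction outright. Hence in every case $a,b$ are the endpoints of a single segment, and by the displayed bound that segment has stretch $\le k$ in $H$. A second point to check carefully is that the segment endpoints $a,b$ are indeed vertices of $P_{x,y}$ at which segments of $\mathcal{S}$ begin and end. This holds because every $e_i$ is a length-one segment of $\mathcal{S}$.
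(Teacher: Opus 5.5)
Your proposal is correct and follows the paper's proof. You route $a\to s\to s'\to b$ in $H$, use finalization and the triangle inequality in $G$ to get $\dist_H(a,b)\le k\,\dist_G(a,b)-W<k\,\dist_G(a,b)$, and invoke minimality of $\mathcal{S}$; your extra handling of the single-edge block (where the bound rules out $\{a,b\}$ being a distant edge) and of $a=b$ just spells out what the paper's one-line appeal to minimality leaves implicit.
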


\begin{proof}
    Let $a,b,s,s'$ be as stated. We show that if $(s, s')$ is finalized in $H$, then $\dist_H(a,b)\leq k \cdot \dist_G(a,b)$. This implies the desired result as $\mathcal{S}$ is a \emph{minimal} segmentation, hence the only possibility is that $a,b$ are the endpoints of a segment in $\mathcal{S}$ whose stretch is at most $k$.
    
    By triangle inequality for the $a\to s \to s' \to b$ in $H$, and the Baswana-Sen invariant we have $\dist_H(a,b) \leq (k-1)W+\dist_H(s,s')$. 
    Also, since $(s,s')$ is assumed to be finalized $\dist_H(s,s')\leq k\cdot\dist_G(s,s')-k^2W$.

    Now by the triangle inequality for $s\to a \to b \to s'$ in $G$ we have $\dist_G(s,s')\leq \dist_G(a,b) + (k-1)W$.
    Hence combining it all together we get:
    
    \[\dist_H(a,b) \leq (k-1)W+ k(\dist_G(s,s')-k^2W) < k\cdot \dist_G(a,b) .\]
\end{proof}

We proceed by noting the following lemma, a small variation of Lemma 1 in \cite{AhmedBSKS20}.

\begin{lemma}\label{lem:edge-diff-cent}
    Let $e_1,\ldots,e_t$ be the edges added during the greedy completion phase.
    Then there exist a sub collection of $t^*\geq \min\{\frac{t}{4}-2,1\}$ edges $e^*_1 = (a_1,b_1),\ldots,e^*_{t^*}=(a_{t^*},b_{t^*})$, such that either the centers of all the vertices $\{a_i\}_{i=1}^{t^*}$ are distinct, or the centers of $\{b_i\}_{i=1}^{t^*}$ are distinct.
\end{lemma}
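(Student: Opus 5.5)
Following the approach of Lemma~1 in \cite{AhmedBSKS20}, we would show that no $R$-center can serve as the center of endpoints of many distant edges that are far apart along $P_{x,y}$. The minimality of $\mathcal{S}$ rules this out, since such repetition would give a shortcut.

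\textbf{Step 1 (key claim).} Suppose $e_i=(u_i,v_i)$ and $e_j=(u_j,v_j)$ with $i<j$ are edges of $E_k(\mathcal{S},H)$, and the first-ordered endpoint $u_i$ of $e_i$ and the second-ordered endpoint $v_j$ of $e_j$ share the same center $s$. We claim $\dist_H(u_i,v_j)\le k\,\dist_G(u_i,v_j)$, provided $j-i$ is at least some small constant.
\begin{itemize}
\item By the Baswana--Sen invariant (\Cref{thm:baswana-sen} with $i=R+1$), $\dist_H(u_i,v_j)\le \dist_H(u_i,s)+\dist_H(s,v_j)\le R\,w(e_i)+R\,w(e_j)\le (k-1)W$.
\item The vertices $u_i,v_i,\dots,u_j,v_j$ lie in order on the shortest path, so $\dist_G(u_i,v_j)\ge w(e_i)+w(e_j)$.
\end{itemize}
With weights, these two bounds alone do not compare $R(w(e_i)+w(e_j))$ to $k\,\dist_G(u_i,v_j)$ favourably; it is immediate since $R<k$. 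Hence $\dist_H(u_i,v_j)\le k\,\dist_G(u_i,v_j)$ whenever $j>i$, and even when $j=i$, using $R\,w(e_i)\cdot 2\le (k-1)w(e_i)\le k\,w(e_i)$ (which contradicts $e_i$ being distant).

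\textbf{Step 2 (contradiction with minimality).} Since $u_i$ and $v_j$ are segment endpoints of $\mathcal{S}$, the segments between them could then be merged into one segment of stretch at most $k$. This contradicts minimality unless they already form a single segment. That cannot happen here, because $e_i$ lies strictly inside the block, and a block with $\ge 2$ segments containing a length-1 distant segment is a genuine merge. So the center of a ``left'' endpoint never equals the center of a ``right'' endpoint of a later edge, except possibly for adjacent boundary cases. We would take care with the edges $e_1$ and $e_t$, whose endpoints are $x$ and $y$; this slack accounts for the $-2$ in the statement.

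\textbf{Step 3 (counting).} Consider the multiset of centers $c(u_i)$ and $c(v_i)$. For each center $s$, look at the set of indices $i$ with $c(u_i)=s$ and the set with $c(v_j)=s$. Step~1 forces every $j$ in the second set to be at most every $i$ in the first set, up to $O(1)$ exceptions. Hence each center is used by $u$-endpoints in a contiguous index range and by $v$-endpoints in an earlier range.

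We then split into two cases. If many edges have distinct $c(u_i)$, we are done. Otherwise, a center repeating among the $u_i$'s over a range blocks all $v$'s after that range, and we would charge repetitions to obtain $t/4$ edges whose $b$-centers are distinct. Concretely, we would restrict to every other edge, then pick greedily from left to right.

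\textbf{Main obstacle.} The hardest step is this combinatorial extraction. The key claim only forbids ``$u$ before $v$'' coincidences, so repeated centers among the $u$'s alone (or the $v$'s alone) are not directly excluded. Applying the same argument to pairs $(u_i,u_j)$ fails, since $u_j$ is not separated by a full distant edge in the right way. First, I would redo Step~1 for same-side pairs $u_i,u_j$ with $j\ge i+2$. In that case $e_i$ still lies strictly between them, and the bound $2R\,W$ versus $k\,\dist_G\ge k\,w(e_i)$ fails only when $w(e_i)$ is small relative to $W$. I would handle this by choosing the endpoint on the heavier side, or by invoking the Baswana--Sen radius $R\,w(e)$ per edge rather than $R\,W$.
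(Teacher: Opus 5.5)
\textbf{There is a genuine gap: the same-side bound, which is the actual content of the lemma, is never proved.}

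Your Steps 1--2 are correct, and in fact they hold for all $i\le j$. If $u_i$ and $v_j$ share a center, then $\dist_H(u_i,v_j)\le R(w(e_i)+w(e_j))\le R\,\dist_G(u_i,v_j)$, which contradicts minimality. But this cross-side statement is not what the lemma asks for, and no counting can extract the lemma from it. Consider the configuration in which every left endpoint $u_i$ has the same center $s$ and every right endpoint $v_i$ has the same center $s'\neq s$. It satisfies your Step~1 constraint, yet it has a single distinct left center and a single distinct right center. So the charging in Step~3 is unfounded. A center repeated among the $u_i$'s only forbids that particular center on later $v_j$'s; it says nothing about the $v_j$'s being distinct among themselves. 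You defer the same-side bound to the ``main obstacle'' paragraph and leave it unresolved there. The comparison of $2RW$ with $k\,w(e_i)$ that you sketch is a dead end, since nothing bounds $w(e_i)$ from below in terms of $W$.

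\textbf{The missing idea} is to use the per-edge radii $R\,w(e)$ and compare them against the weight of a \emph{neighboring} added edge.
\begin{itemize}
\item Call $e_m$ pre-light if $w(e_{m-1})\ge w(e_m)$, and post-light if $w(e_{m+1})\ge w(e_m)$. For every $2\le m\le t$, either $e_m$ is pre-light or $e_{m-1}$ is post-light. So roughly half the edges belong to one class; say pre-light.
\item Suppose pre-light edges $e_i,e_j,e_m$ with $i<j<m$ had left endpoints with a common center $s$. Then $e_{m-1}\neq e_i$ (as $m-1\ge j>i$), and both lie on $P_{x,y}[u_i,u_m]$. Hence $\dist_H(u_i,u_m)\le R(w(e_i)+w(e_m))\le R(w(e_i)+w(e_{m-1}))\le R\,\dist_G(u_i,u_m)$.
\item So the segments between $u_i$ and $u_m$ could be merged into one segment of stretch at most $k$, contradicting minimality. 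Every center is therefore the left center of at most two pre-light edges.
\item A maximal subcollection of pre-light edges with distinct left centers then contains at least half of them, i.e.\ about $t/4$ edges.
\item The post-light case is symmetric with right endpoints, using $e_{i+1}$ in place of $e_{m-1}$.
\end{itemize}
Your remark about ``choosing the endpoint on the heavier side'' points in this direction. The obstruction is precisely that $w(e_m)$ may dominate the whole subpath $P_{x,y}[u_i,u_m]$, e.g.\ with rapidly increasing weights. Pre-lightness is exactly what rules this out.
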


\begin{proof}
    If $t<12$ there is nothing to prove (any collection of a single edge will do), so assume $t\geq 10$. For $1<i<t$, say that the $i^{th}$ edge, $e_i$ is pre-heavy if $w(e_{i-1})\leq w(e_i)$, i.e. it is heavier than the edge preceding it (among the added edges). Otherwise, it is pre-light (ties are broken arbitrarily). As every internal edge is either pre-heavy or pre-light, $\geq \frac{t-2}{2}$ edges are pre-heavy or pre-light, assume without loss of generality that they are pre-light.
    
    We now claim that if $e'_1 = (a'_1,b'_1),\ldots e'_{t'}=(a'_{t'},b'_{t'})$\footnote{We assume that the edges are ordered according to their order on the path $P_{x,y}$. That is $e'_i$ is closer to $x$ then $e'_j$ for $i<j$, and that $a'_i$ is closer to $x$ then $b'_i$} are pre-light, then every center $s$ can be the center of at most $2$ vertices in $\{a'_i\}_{i}$.
    
    To prove the claim, assume we have $i<j<m$ such that $s$ is the center of $a'_i,a'_j,a'_m$. From the initialization step we have: $\dist_H(a'_i,a'_m)\leq \dist_H(a'_i,s)+\dist_H(s,a'_m) = R (w(e'_i)+w(e'_m))$.

    Let $e'$ be the edge preceding $e'_m$ on $P_{x,y}$. Since $e'_m$ is pre-light we have $w(e')\geq w(e'_m)$. Also since $i<j<m$, it must be that $e'$ and $e'_i$ are different edges, which both lie on $P_{x,y}[a'_i,a'_m]$. We obtain $\dist_G(a'_i,a'_m)\geq (w(e'_i)+w(e'))$, and we get:  
    \[\dist_H(a'_i,a'_m)\leq R(w(e'_i)+w(e_m'))\leq  R(w(e'_i)+w(e'))\leq R \dist_G(a'_i,a'_m).\]
    Since $R\leq k$ we get a contradiction since all the segments between $a'_i,a'_m$ could be merged to one segment with stretch $\leq k$, contradicting the minimality of the segmentation.

    Finally, we got a collection of $\geq \frac{t-2}{2}$ edges such that every center $s$ is the center left endpoint of at most two of them, hence we can find a subcollection of at least $\frac{t-2}{4}$ edges whose left centers are all distinct\footnote{Take any maximal subcollection $E'$ of edges whose left centers are all different, since for every edge $e\not\in E'$ there must be another edge $e^* \in E'$ with the left center of $e$ as the left center of $e^*$, and this mapping is injective (as each center is the left center of $\leq 2$ edges), there are more edges inside the maximal subcollection than outside it - the claim follows. }.

    When most internal edges are pre-heavy, we consider them as post-light. In such a case we find a collection of $\Omega(t)$ edges with distinct right centers.
\end{proof}

\begin{corollary}\label{cor:improve-not-final-each-round}
    In a round with $t$ edges added, $\Omega(t)$ pairs of centers that are not finalized are improving.
\end{corollary}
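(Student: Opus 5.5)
Apply \Cref{lem:edge-diff-cent} to the added edges $e_1,\ldots,e_t$. This gives $t^*=\Omega(t)$ edges, which we may take to be internal ($1<i<t$) after discarding at most two. Either their left endpoints $a_i$ (which play the role of $u_i$) have distinct centers, or their right endpoints $b_i$ (the $v_i$) do. For each such edge, \Cref{lem:distance-improvement} applied with $s_i$ equal to the distinguished center says that either $(s_x,s_i)$ improves by at least $\tfrac12 W$, or the pair formed by the center of $v_i$ and $s_y$ does. This step needs care. The second alternative of \Cref{lem:distance-improvement} is phrased with $s_{v_i}$ for every choice of $s_i$. So in the left-distinct case I will instead rerun the argument of \Cref{clm:center-dist-sum-after-addition} with the same $s_i$ on both sides, obtaining that $(s_x,s_i)$ or $(s_i,s_y)$ improves by $\tfrac12 W$.

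The resulting improving pairs are pairwise distinct. For each of the $\Omega(t)$ edges we get a pair $(s_x,s_i)$ or $(s_i,s_y)$, and the $s_i$ are distinct, so the pairs $\{s_x,s_i\}$ are distinct among themselves and so are the pairs $\{s_i,s_y\}$. A pair $\{s_x,s_i\}$ can coincide with a pair $\{s_j,s_y\}$ only if $s_i=s_y$ and $s_j=s_x$, which happens for at most one value of $i$ and one of $j$. Hence at least $t^*-O(1)$ distinct improving pairs remain. If $t$ is bounded by a constant, \Cref{obs:farth-cent-pair-improv} supplies the single improving pair $(s_x,s_y)$ instead.

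It remains to show that these pairs are not finalized, i.e.\ that they are not finalized in $H$, the state at the start of the iteration. Every pair in question is a pair of centers of vertices $a,b\in\bigcup_i V(e_i)$. By \Cref{lem:finalized-no-improve}, if such a pair were finalized, then $a,b$ would be the two endpoints of a single segment of $\mathcal{S}$ of stretch at most $k$. In the pairs $(s_x,s_i)$ we use $a=x=u_1$ and $b=u_i$ (or $v_i$) with $1<i<t$, so $e_1$ is a length-one distant segment strictly between $a$ and $b$. For $(s_i,s_y)$ we use $a=v_i$ (or $u_i$) and $b=v_t$, so $e_t$ lies strictly between them. For $(s_x,s_y)$ we use $a=x$ and $b=y$, with $t\ge2$ implicit since a single distant edge cannot violate stretch by \Cref{obs:one-missing-edge-stretch}. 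In every case $a$ and $b$ are separated by at least one distant edge, so they are not the endpoints of a single segment of stretch at most $k$. Hence none of these pairs is finalized.

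The main obstacle I expect is the bookkeeping in the first step: matching the center used in the distinctness lemma to the center appearing in the improvement alternative. The fix is the symmetric version of \Cref{clm:center-dist-sum-after-addition}, and its proof goes through verbatim for either choice of endpoint.
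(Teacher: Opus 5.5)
Your proposal is correct and follows the paper's proof essentially step for step. The paper uses the same four ingredients: \Cref{lem:edge-diff-cent} to get $\Omega(t)$ edges with distinct centers, \Cref{lem:distance-improvement} to pair each such center with $s_x$ or $s_y$, \Cref{lem:finalized-no-improve} to rule out finalized pairs, and \Cref{obs:farth-cent-pair-improv} for constant $t$. The only differences are bookkeeping. You explicitly use the same-$s_i$ form of the second alternative, which is exactly what \Cref{clm:center-dist-sum-after-addition} already gives, so only the short contradiction argument of \Cref{lem:distance-improvement} needs rerunning. You also bound the cross-collisions $\{s_x,s_i\}=\{s_j,s_y\}$ directly, whereas the paper simply discards the few edges whose center is $s_x$ or $s_y$.
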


\begin{proof}
    If $t<100$ then by \Cref{obs:farth-cent-pair-improv} we have a single pair of centers improving. Otherwise by \Cref{lem:edge-diff-cent} we have $\geq t/4-2$ edges with distinct (without loss of generality) left centers. By removing at most four edges from that collection, those whose left centers are $s_x$ or $s_y$, we have $\Omega(t)$ edges with distinct left centers, that are not $s_x$ or $s_y$. By \Cref{lem:distance-improvement}, all these centers give an improved pair with either $s_x$ or $s_y$, and the resulting center pairs are all distinct. The considered pairs of centers in \Cref{lem:distance-improvement} all attach to either $x$ or $y$ and another vertex on $P_{x,y}$, hence they attach to $P_{x,y}$ to two points that don't form a good stretch segment, hence by \Cref{lem:finalized-no-improve} all these pairs of centers are not finalized.
\end{proof}
    
We now complete the proof of \Cref{thm:size-bound-greedy}:

\begin{proof}
    Let $t_i$ be the number of edges added in the $i^{th}$ round of the completion process, and let $m_i$ be the number of center pairs whose distance improved in the $i^{th}$ round according to \Cref{lem:distance-improvement}, which are not finalized. By \Cref{cor:improve-not-final-each-round} we have $m_i = \Omega(t_i)$. As each pair of centers can improve $O(k^2)$ times before it becomes finalized and there are $\tilde{O}(n^{1+1/k})$ center pairs we have 
    \[\Omega \left(\sum t_i\right)=\sum m_i = \tilde{O}(k^2 n^{1+1/k})\]
    Since $k=O(\log(n))$ we get that the number of edges added during the completion phase, $\sum t_i$, is $\tilde{O}(n^{1+1/k})$, as required.
\end{proof}

%% file: Conclusions.tex
\section{Final Remarks and Open Problems}\label{sec:final-remarks}

A main open question left from this work is whether unweighted $(\alpha,\beta)$-spanners can be extended to the weighted setup? There are many $(\alpha,\beta)$ for which $(\alpha,\beta)$-spanners exist for unweighted graphs, though almost all constructions do not generalize; e.g., it is open whether the construction of \cite{Ben-LevyP20}, as well as the recent constructions of \cite{ChechikL26} which have $o(k)=\alpha = \Omega(1)$ can be extended to the weighted setting.

\begin{question}
    Is it the case that if there is an unweighted $(\alpha,\beta)$-spanner for some $(\alpha,\beta)$, then there is a weighted $(\alpha, \beta)$-spanner with a similar size guarantee.
\end{question}

Recall that the lightness of a spanner is the ratio of its total weight divided by the weight of a minimum spanning tree. We ask:

\begin{question}
    For which $(\alpha,\beta)$ can we provide an $(\alpha,\beta)$-spanner with relatively small lightness?
\end{question}

Finally we propose the following more structural direction:
\paragraph*{Where did the $k-1$ go?} The keen reader may have noticed that in \Cref{sec:algorithm} the fact that $\dist_H(x,y) > k\cdot \dist_G(x,y) + (k-1)W$ was not used. It actually appears only in \Cref{obs:one-missing-edge-stretch}, hence under the assumption that $(x,y)$ have a minimal segmentation with \emph{more than} one missing edge we can run the completion algorithm under the same size bound and give the guarantee $\dist_H(x,y) \leq k\dist_G(x,y) + W_{max}(G)$. As explained in \Cref{rmk:choice-of-param} the choice of the constant $1$ is arbitrary. In fact, one can achieve the following: For any $\varepsilon>0$, and any weighted graph $G$ there exist a subgraph $H$ of size $\Tilde{O}(n^{1+1/k}/\varepsilon)$ such that for any $x,y$, and any shortest path $P_{x,y}$ between them in $G$ one has:
\begin{itemize}
    \item($(k,k-1)$-local stretch) $\dist_H(x,y) \leq k \cdot \dist_G(x,y) + (k-1) W_{max} (P_{x,y})$, where $P_{x,y}$ is an $x$ to $y$ shortest path. Or,

    \item (stretch $k$ and additive $+\varepsilon W_{max}(G)$) Namely, $\dist_H(x,y) \leq k \cdot \dist_G(x,y) + \varepsilon W_{max} (G)$.
\end{itemize}

The proof follows from the analysis in \Cref{sec:analysis}. Pick a minimal segmentation $\mathcal{S}$ of $P_{x,y}$, and distinguish paths according to $|E_k(\mathcal{S},H)|$ . If $|E_k(\mathcal{S},H)|=0$ the first alternative above clearly holds. If $|E_k(\mathcal{S},H)|=1$, then by \Cref{obs:one-missing-edge-stretch} we have $\dist_H(x,y) \leq k \cdot \dist_G(x,y)+(k-1)w(e)$ where $e$ is the missing edge, hence the first guarantee holds. Finally if $|E_k(\mathcal{S},H)|\geq 2$, then we check if the second guarantee holds, and if so we do not do anything, if not, we add $E_k(\mathcal{S},H)$. By changing the definition of improved pairs,  \Cref{def:improved-pair} to have distance difference between $H$, $H'$ bigger than $\frac{\varepsilon}{2} W$. All parts of the proof stay exactly the same, except for the final proof of  \Cref{thm:size-bound-greedy}, which needs to be augmented as every pair of centers can improve $O(k^2/\varepsilon)$ times, yielding an $1/\varepsilon$ larger spanner.

\begin{question}
    Can one provide a better structural guarantee than being just an $(\alpha,\beta)$ spanner for weighted graph? E.g. is it possible to ensure that for every $x,y$ and any shortest path $P_{x,y}$, can be segmented into $O(\beta)$ edges from $G$ and paths of stretch $\alpha$ plus a relatively negligible error?
\end{question}

It is plausible that resolving the question above can provide a path to allow comparison for the guarantees of $(\alpha,\beta)$-spanners, for different $\alpha$'s and $\beta$'s, similarly to the unweighted setup.

%% file: Even.tex



